\documentclass[10pt]{article}
\usepackage[a4paper,left=24mm,right=24mm,top=23mm,bottom=24mm]{geometry}
\usepackage[T1]{fontenc}
\usepackage{amsmath,amssymb,amsfonts,mathtools,amsthm}
\usepackage{txfonts}
\usepackage{booktabs,array,enumitem,microtype,mathrsfs}
\usepackage[numbers,sort&compress]{natbib}
\usepackage{needspace}
\usepackage{url}
\usepackage[hidelinks,hypertexnames=false]{hyperref}
\hypersetup{
  bookmarksnumbered=true,
  pdfauthor={Yangcheng Li and Pingzhi Yuan},
  pdftitle={MDS codes containing a fixed subcode over finite field extensions},
  pdfsubject={Version 8}
}
\numberwithin{equation}{section}
\newtheorem{theorem}{Theorem}[section]
\newtheorem{proposition}[theorem]{Proposition}
\newtheorem{lemma}[theorem]{Lemma}
\newtheorem{corollary}[theorem]{Corollary}
\newtheorem{definition}[theorem]{Definition}
\newtheorem{remark}[theorem]{Remark}
\newtheorem{example}[theorem]{Example}

\newcommand{\F}{\mathbb F}
\newcommand{\PP}{\mathbb P}
\newcommand{\Gr}{\operatorname{Gr}}
\newcommand{\wt}{\operatorname{wt}}
\newcommand{\supp}{\operatorname{supp}}
\newcommand{\rank}{\operatorname{rank}}
\newcommand{\Span}{\operatorname{span}}
\newcommand{\sdef}{\operatorname{sdef}}
\newcommand{\cC}{\mathcal C}

\newcommand{\cP}{\mathcal P}

\begin{document}

\title{MDS codes containing a fixed subcode over finite field extensions}
\author{Yangcheng Li\thanks{Corresponding author. Email: \texttt{liyc@m.scnu.edu.cn}.}\quad
Pingzhi Yuan\thanks{Email: \texttt{yuanpz@scnu.edu.cn}.}\\[3pt]
\small School of Mathematical Sciences, South China Normal University\\
\small Guangzhou 510631, Guangdong, China}
\date{}
\maketitle

\begin{abstract}
Let $2\le k\le n$ and let $U\le\F_q^n$ be a fixed $(k-1)$-dimensional
linear code. We study the $[n,k]$ maximum distance separable (MDS) codes
over finite extensions of $\F_q$ that contain the scalar extension of $U$.
Our approach constructs an auxiliary linear code from \(U\). Its full-weight words characterize the desired MDS completions.
We prove that \(U\) admits an \([n,k]\) MDS completion over some finite extension of \(\mathbb F_q\) if and only if \(d(U)\ge n-k+1\). When this condition holds, the number of these MDS codes
over $\F_{q^m}$ is
\[
\#\left\{W\le\F_{q^m}^n:
U\otimes_{\F_q}\F_{q^m}\subseteq W,\ \dim_{\F_{q^m}} W=k,\ W\text{ is MDS}\right\}
=\frac{\chi_{\delta_GM(G)}(q^m)}{q^m-1},
\]
where $G$ is a generator matrix of $U$, $\delta_GM(G)$ is its represented
derived matroid, and $\chi_{\delta_GM(G)}$ is its characteristic polynomial.
In particular, the formula determines existence over each extension field
and the least extension degree for which such an MDS code exists.
Examples show that these counts may distinguish different representations
of the same column matroid, while monomially inequivalent codes may have
identical counts over every finite extension.
\end{abstract}

\noindent\textbf{Keywords:} MDS codes; full-support words; Pl\"ucker coordinates; derived matroids; characteristic polynomials.

\noindent\textbf{2020 Mathematics Subject Classification:}
Primary 94B05; Secondary 05B35, 94B27, 94B75.

\section{Introduction}\label{sec:intro}

A linear $[n,k]_q$ code is maximum distance separable (MDS) if it
attains the Singleton bound, that is, if its minimum distance is
$n-k+1$. Equivalently, every $k$ columns of a generator matrix are
linearly independent. This criterion links MDS codes to arcs in finite
projective spaces and underlies much of their theory of existence,
construction, and classification; see
\cite{HuffmanPless,BallLavrauwArcs}. Those problems typically seek an
entire MDS code. A different problem arises when part of the desired
code has already been fixed: can one preserve that subcode while
completing it to an MDS code?

Our motivation comes from a covering problem involving cyclic
shifts of linear subspaces. Given a subspace
$V\leq\F_q^n$, one may ask whether the union of its
cyclic coordinate shifts is the whole space $\F_q^n$.
The invariant $h_q(n)$ is the largest codimension of
a subspace with this property. Li and Yuan proved
in \cite[Theorem~2.1]{LiYuanCyclicCovering} that
\[
h_q(n)=0
\quad\Longleftrightarrow\quad
\text{every nonzero cyclic code in $\F_q^n$
	contains a full-weight word}.
\]
This characterization has a natural interpretation in terms of MDS codes. Indeed, let \(C\) be a nonzero cyclic code and set \(U=C^{\perp}\). Since the dual of a cyclic code is again cyclic, \(U\) is also a cyclic code. Every nonzero \(a\in C\) defines a hyperplane \(a^{\perp}\) containing \(U\), and this hyperplane is an MDS \([n,n-1]\) code precisely when \(a\) has no zero coordinates. Consequently, \(h_q(n)=0\) is equivalent to the statement that every proper cyclic subspace of \(\mathbb{F}_q^n\) is contained in a hyperplane that is an MDS \([n,n-1]\) code. This motivates the study of MDS codes containing a prescribed subcode. We focus on the one-step case: given an arbitrary linear code \(U\), when can it be extended to an MDS code of dimension \(\dim U+1\), and how many such completions exist over finite field extensions?

The existence of nested MDS codes over sufficiently large scalar
extensions is connected with Pellikaan's construction of
error-correcting pairs \cite[Proposition~5.1]{PellikaanECP}. In the
parameter range considered below, applying this result to $U^\perp$
and dualizing provides a finite-extension MDS completion whenever the
necessary distance condition holds. Related existence results for
MDS seeds follow from \cite[Theorem~6.6]{CampsGorlaLopezCodeDistances}.
These results concern whether a completion can be found; our aim is to
count completions over each extension and determine the first
extension degree exactly.

Wu, Ding, and Chen \cite{WuDingChen} characterized when a one-coordinate extension of an MDS code remains MDS in terms of the covering radius and deep holes of its dual. Wang and Luo \cite{WangLuo2026} recently developed an iterative algorithm for extending non-Reed--Solomon MDS codes by adding a coordinate. Our work studies dimension-increasing MDS completions at fixed length and determines their exact numbers over finite field extensions.

Fix integers $2\le k\le n$ and a prescribed $(k-1)$-dimensional
subcode \(U\le\F_q^n.\) Throughout this paper, we refer to this fixed subcode \(U\) as the \emph{seed}, since it serves as the starting point from which the larger MDS code is constructed.
For $Q=q^m$, let
\[
U_Q=U\otimes_{\F_q}\F_Q
\cong \operatorname{span}_{\F_Q}(U)
\subseteq \F_Q^n
\]
be the scalar extension of $U$ to $\F_Q$.
Here the natural identification is given by
$u\otimes a\mapsto au$.
Thus tensoring with $\F_Q$ amounts to allowing
coefficients from $\F_Q$ in linear combinations
of vectors of $U$.
In particular,
$\dim_{\F_Q}U_Q=\dim_{\F_q}U=k-1$.
We count
\begin{equation}\label{eq:definition-N}
N_U(Q)=
\#\bigl\{W\le\F_Q^n:U_Q\subseteq W,\ \dim_{\F_Q}W=k,
\ W\text{ is MDS}\bigr\}.
\end{equation}
A code $W$ in this set is a \emph{one-step MDS completion} of $U_Q$:
``one-step'' means that the dimension is increased by one while the
length is unchanged. We work in fixed coordinate positions: \eqref{eq:definition-N} counts
actual subspaces, not monomial-equivalence classes.

The problem naturally leads to three questions:
\begin{enumerate}
	\item[(i)] 
	Does there exist a finite extension $\F_{q^m}$
	such that $N_U(q^m)>0$?
	
	\item[(ii)] 
	Can we determine $N_U(q^m)$ explicitly
	for every $m\ge1$?
	
	\item[(iii)] 
	If such an extension exists, what is the smallest
	$m\ge1$ for which $N_U(q^m)>0$?
\end{enumerate}

Our approach is inspired by the full-weight perspective arising from the cyclic covering problem. With a generator matrix of the seed fixed, the
maximal minors of an appended generator matrix depend linearly on the
appended row. Their values form the Pl\"ucker extension code
$\cP(U)$ of Definition~\ref{def:plucker-extension-code}.
Over each scalar extension, its full-weight words, modulo nonzero
scalars, correspond exactly to the desired MDS completions, as established in
Theorem~\ref{thm:full-MDS}.

This correspondence leads to the following results. First, a finite-extension MDS completion exists if and only if \(d(U) \geq r = n - k + 1\); equivalently, the seed \(U\) is MDS or almost MDS. Second, when this condition holds, for every \(Q = q^m\),
\[
N_U(Q)
=\frac{\#\{c\in\cP(U)_Q:\wt(c)=\binom nk\}}{Q-1}
=\frac{\chi_{\delta_G M(G)}(Q)}{Q-1},
\]
where \(\delta_G M(G)\) is the represented derived matroid of a generator matrix \(G\) of \(U\). Consequently, existence over \(\mathbb{F}_{q^m}\) is equivalent to \(\chi_{\delta_G M(G)}(q^m) > 0\). Third, the admissible extension degrees form a threshold interval \(m \geq m_0(U)\), and \(m_0(U)\) has a support-cover interpretation and the bound \(m_0(U) \leq \min\{r,\max\{1, \lceil \log_q \binom{n}{k} \rceil\}\}\). 

The paper is organized as follows. Section~\ref{sec:prelim} collects
the coding-theoretic, matroidal, arrangement-theoretic, and metric
background used later. Section~\ref{sec:main} develops the full-weight
correspondence, identifies the derived matroid, and derives the counting
and extension-degree results together with their covering-radius
interpretation. Section~\ref{sec:benchmark} presents two explicit
examples illustrating the information captured by the counts and their
limitations. Section~\ref{sec:conclusion} summarizes the main results.

\section{Preliminaries}\label{sec:prelim}

This section fixes notation and recalls the tools used in the proof.

\subsection{Linear codes, MDS codes, and scalar extensions}
\label{subsec:prelim-codes}

Throughout, $[n]=\{1,\ldots,n\}$ and $V=\F_q^n$. We use row
vectors for codewords and circuit vectors. For a matrix $A$, we use
the row-vector convention $\ker A=\{x:Ax^{\mathsf T}=0\}$, identifying
the usual column kernel with its transpose. Matrix columns and
syndrome vectors are viewed as column vectors. For a vector $x$, its
Hamming weight $\wt(x)$ is the number of its nonzero coordinates and
its support is denoted by $\supp(x)$. A word has \emph{full weight}, if all its coordinates are nonzero,
that is, if its weight equals its length. The minimum distance of a nonzero linear code $C$ is
$d(C)=\min_{0\ne c\in C}\wt(c)$. An $[n,s]_q$ code satisfies the
Singleton bound $d(C)\le n-s+1$. Its Singleton defect is
\[
\sdef(C)=n-s+1-d(C).
\]
A code is MDS if this defect is zero and almost MDS (AMDS) if the
defect is one. A $k\times n$ generator matrix generates an MDS code
if and only if each choice of $k$ columns is linearly independent
\cite{HuffmanPless}.

Two linear codes of the same length over the same field are
\emph{monomially equivalent} if one is obtained from the other by
permuting coordinates and multiplying each coordinate by a nonzero
field element.

For an extension field $K/\F_q$, write
$C_K=C\otimes_{\F_q}K\le K^n$. For a finite extension of order
$Q=q^m$, we abbreviate $C_{\F_Q}$ to $C_Q$; in particular,
$U_{q^m}=U_{\F_{q^m}}$. Submatrices of a matrix over $\F_q$
have the same ranks after scalar extension, a fact that immediately
controls the minimum distance.

\begin{lemma}\label{lem:zero-rank}
If $C\le\F_q^n$ has dimension $s\ge1$ and generator matrix $A$, then
\begin{equation}\label{eq:distance-rank}
 d(C)=n-\max\{|S|:\rank A_S<s\}.
\end{equation}
Here, for any subset \(S \subseteq [n]\), \(A_S\) denotes the submatrix of \(A\) consisting of the columns indexed by \(S\).
In particular, Hamming minimum distance is unchanged by scalar extension.
\end{lemma}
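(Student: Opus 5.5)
The plan is to identify, for each $S\subseteq[n]$, the codewords of $C$ that vanish on $S$, and then read off the minimum distance from the largest such $S$ on which a nonzero codeword can vanish. Every codeword of $C$ has the form $c=xA$ for a unique $x\in\F_q^s$, because $A$ has full row rank $s$. For $S\subseteq[n]$, the restriction satisfies $c_S=xA_S$. Hence $c$ vanishes on $S$ exactly when $x$ lies in the left kernel of $A_S$. That left kernel is nonzero if and only if the rows of $A_S$ are dependent, that is, if and only if $\rank A_S<s$.

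Next we translate this into weights. A nonzero codeword $c=xA$ with $x\neq0$ has $\wt(c)=n-|Z(c)|$, where $Z(c)$ is its zero set. If $\rank A_S<s$, choose a nonzero $x$ in the left kernel of $A_S$. The codeword $c=xA$ is then nonzero, since $x\ne0$ and $A$ has full row rank, and its zero set contains $S$. This gives $\wt(c)\le n-|S|$, so $d(C)\le n-\max\{|S|:\rank A_S<s\}$.

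Conversely, let $c=xA\neq0$ and put $S=Z(c)$. Then $x\neq0$ is in the left kernel of $A_S$, so $\rank A_S<s$. Hence $\wt(c)=n-|S|\ge n-\max\{\cdots\}$. The maximum is over a nonempty family, since $S=\emptyset$ gives rank $0<s$ for the empty matrix. Combining the two inequalities gives \eqref{eq:distance-rank}.

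For the scalar-extension statement, note that $A$ is also a generator matrix of $C_K$ over $K$. This holds because $C_K=\operatorname{span}_K(\text{rows of }A)$ and the rows remain $K$-linearly independent, as the rank of $A$ is unchanged. The rank of each submatrix $A_S$ is likewise the same over $\F_q$ and over $K$: rank is the largest size of a nonvanishing minor, and minors are computed in $\F_q$. Applying \eqref{eq:distance-rank} over both fields therefore gives $d(C_K)=d(C)$. The only delicate point is this field-independence of rank, which the minor characterization settles immediately.
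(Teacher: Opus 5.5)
Your proof is correct and follows essentially the same route as the paper: a codeword $xA$ vanishes on $S$ exactly when $x$ is a nonzero left-kernel vector of $A_S$, so minimizing the weight amounts to maximizing $|S|$ subject to $\rank A_S<s$, and since submatrix ranks do not depend on the field, the distance is unchanged under scalar extension. Your version is more explicit than the paper's: you prove the equality as two inequalities, check that the family is nonempty using $S=\emptyset$, and justify field-independence of rank through minors.
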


\begin{proof}
	A word \(xA\) (with \(x \in \mathbb{F}_q^s\)) vanishes on a coordinate set \(S\) exactly when \(xA_S=0\). Such a nonzero coefficient vector \(x\) exists exactly when the rows of \(A_S\) are linearly dependent, i.e., \(\rank A_S < s\). 
	
	The Hamming weight of \(xA\) is precisely \(n - |S|\), where \(S\) is its zero set. Thus minimizing the weight over all nonzero \(xA\) is equivalent to maximizing the size of a set \(S\) such that \(\rank A_S < s\). This proves \eqref{eq:distance-rank}.
	
	Finally, since extending the field does not change the rank of any submatrix, the minimum distance remains unchanged under scalar extension.
\end{proof}

\subsection{One-step directions and Pl\"ucker coordinates}
\label{subsec:prelim-plucker}

Fix a \((k-1)\)-dimensional code \(U \le V = \mathbb{F}_q^n\), where \(2\le k\le n\), and put
\[
r=n-k+1,\qquad E_U=V/U.
\]
Let \(K/\mathbb{F}_q\) be a finite extension and set \(V_K = K^n\). 
A \(k\)-dimensional code containing \(U_K\) is necessarily of the form \(U_K+\langle v\rangle\) for some \(v\notin U_K\). Two such vectors produce the same code if and only if their nonzero classes in \(K^n/U_K\) differ by a scalar in \(K^\times\). Thus, one-step directions are the points of \(\PP(K^n/U_K)\), a projective space of dimension \(r-1\). In other words, a one-step direction is an equivalence class of candidate vectors \(v\), where we identify vectors that yield the same code.

On the other hand, the completed code itself, \(W = U_K + \langle v \rangle\), is a \(k\)-dimensional subspace of \(K^n\). Recall that the Grassmannian \(\Gr(k,n)\) parametrizes the \(k\)-dimensional subspaces of a fixed \(n\)-dimensional vector space. A full-rank \(k\times n\) matrix specifies a point of \(\Gr(k,n)\) by its row space. 

More explicitly, there are exactly \(\binom{n}{k}\) maximal minors (each corresponding to a choice of \(k\) columns). Ordered lexicographically, these minors form a vector in \(K^{\binom{n}{k}}\). If \(A\) and \(A'\) are two full-rank \(k\times n\) matrices with the same row space, then \(A'=SA\) for some \(S\in\operatorname{GL}_k(K)\). For every \(k\)-subset \(I\),
\[
\det A'_I=\det(S)\det A_I.
\]
Thus all maximal minors are multiplied by the same nonzero scalar, so they determine a well-defined point in \(\mathbb{P}^{\binom{n}{k}-1}\). These projective coordinates are the Plücker coordinates of the subspace \cite{HarrisAG}.

In this geometric language, Proposition \ref{prop:MDS-plucker} states that a subspace corresponds to an MDS code if and only if all of its Plücker coordinates are nonzero.

\begin{proposition}\label{prop:MDS-plucker}
A full-rank $k\times n$ matrix generates an MDS code if and only
if each of its $k\times k$ minors is nonzero. Equivalently, every
Pl\"ucker coordinate of its row space in $\Gr(k,n)$ is nonzero.
\end{proposition}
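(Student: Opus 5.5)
The plan is to reduce the statement to the column criterion for MDS codes recalled in Subsection~\ref{subsec:prelim-codes}: an $[n,k]$ code with a $k\times n$ generator matrix $A$ is MDS if and only if every choice of $k$ columns of $A$ is linearly independent. For a $k$-subset $I\subseteq[n]$, the submatrix $A_I$ is square of size $k\times k$. Its columns are linearly independent exactly when $\det A_I\neq 0$. So the condition on columns translates directly into the nonvanishing of all $\binom nk$ maximal minors, which gives the first assertion.

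To keep the argument self-contained, I will also verify the column criterion through Lemma~\ref{lem:zero-rank}, applied with $s=k$. That lemma gives $d(C)=n-\max\{|S|:\rank A_S<k\}$. Suppose every $k$-subset $I$ has $\rank A_I=k$. Then any $S$ with $\rank A_S<k$ satisfies $|S|\le k-1$, since otherwise $S$ would contain some $k$-subset of full rank. Hence $d(C)\ge n-k+1$, and the Singleton bound forces equality.

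Conversely, suppose some $\det A_I=0$. Then $\rank A_I<k$ with $|I|=k$, so $d(C)\le n-k<n-k+1$ and the code is not MDS. Thus MDS is equivalent to the nonvanishing of all $k\times k$ minors.

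For the second assertion, I will use the observation made before the statement. Replacing $A$ by $SA$ with $S\in\operatorname{GL}_k$ multiplies every minor by the same nonzero scalar $\det S$. Consequently, the property that all maximal minors are nonzero depends only on the row space. It is therefore exactly the property that every Pl\"ucker coordinate of that point of $\Gr(k,n)$ is nonzero.

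No real obstacle is expected here. The only care needed is in the step ``$|S|\ge k$ implies $S$ contains a full-rank $k$-subset,'' and the corresponding converse, so that the distance formula of Lemma~\ref{lem:zero-rank} is matched correctly against the Singleton bound.
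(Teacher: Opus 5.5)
Your proposal is correct and follows essentially the same route as the paper: you reduce the MDS property to the condition that every $k$ columns are linearly independent, translate this into the nonvanishing of all $k\times k$ minors, and identify those minors with the Pl\"ucker coordinates. Your additional check of the column criterion via Lemma~\ref{lem:zero-rank} and the Singleton bound simply spells out the step that the paper compresses into the remark that a dependence among $k$ columns amounts to a nonzero codeword vanishing on those $k$ coordinates.
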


\begin{proof}
A code is MDS if and only if every $k$ columns of a generator matrix
are linearly independent. Indeed, a dependence among some $k$ columns
is equivalent to the existence of a nonzero codeword vanishing on those
$k$ coordinates. The final statement follows because the maximal minors
are precisely the Pl\"ucker coordinates
\cite{HuffmanPless,HarrisAG}.
\end{proof}

For the particular row-extension problem, only the elementary
minor criterion above is necessary.

\begin{example}\label{ex:three-coordinates}
Let $U=\langle(1,1,1)\rangle\le\F_2^3$, and extend scalars to
$K=\F_{2^m}$ of order $Q$. Appending $v=(v_1,v_2,v_3)$ gives
\[
\begin{pmatrix}1&1&1\\v_1&v_2&v_3\end{pmatrix}.
\]
Its three maximal minors are $v_2-v_1$, $v_3-v_1$, and $v_3-v_2$.
Thus the resulting $[3,2]_Q$ code is MDS precisely when the three
entries of $v$ are pairwise distinct. There are
$Q(Q-1)(Q-2)$ such vectors. Since \(U_K\) has size \(Q\), adding any element of \(U_K\) to \(v\) yields \(Q\) distinct vectors in the same coset. Additionally, scaling this nonzero coset by any element in \(K^\times\) gives \(Q - 1\) distinct cosets that define the same completion. Hence
\[
N_U(Q)=\frac{Q(Q-1)(Q-2)}{Q(Q-1)}=Q-2.
\]
In particular, there is no completion over $\F_2$, but there are
two over $\F_4$. 
\end{example}

\subsection{Matroids, circuits, and represented derived matroids}
\label{subsec:prelim-matroids}

Let \(A=(a_e)_{e\in E}\) be a matrix over \(\mathbb{F}_q\). Its column matroid \(M(A)\) has ground set \(E\): a subset \(S\subseteq E\) is independent exactly when the columns indexed by \(S\) are linearly independent. The rank function is \(r_{M(A)}(S)=\operatorname{rank} A_S\), and the rank of the matroid itself is $r(M(A)) = r_{M(A)}(E) = \operatorname{rank} A$. A \emph{circuit} is a minimal dependent set. In particular, a zero column gives a loop (a one-element circuit), and two nonzero proportional columns form a two-element circuit. The uniform matroid \(U_{s,n}\) is the rank-\(s\) matroid in which every subset of size at most \(s\) is independent. A \emph{flat} is a subset \(F \subseteq E\) such that if an element \(e \in E\) has its column in the span of the columns indexed by \(F\), then \(e \in F\) (equivalently, \(F\) is closed under the matroid closure). We use these notions in their usual representable-matroid sense \cite{OxleyBook}.

For a matrix \(G\) and a circuit \(C\) of \(M(G)\), minimal dependence gives a vector \(c_C\in\ker G\), unique up to a nonzero scalar, with \(\operatorname{supp}(c_C)=C\). Such a vector is called a \emph{circuit vector}. If \(M(G)\) has \(t\) circuits, the \emph{represented derived matroid} \(\delta_GM(G)\) is the column matroid of the \(n \times t\) matrix whose columns are the transposes \(c_C^{\mathsf T}\) of chosen circuit vectors, one for each circuit \(C\) \cite{OxleyWang}. Its ground set is therefore the set of circuits of \(M(G)\), not the original set of coordinate positions. Different non-equivalent representations of an abstract matroid can yield different represented derived matroids; for this reason we retain the subscript \(G\).

For the row code \(U\) of \(G\), note the canonical identification
\[
(V/U)^*\simeq U^\perp=\ker G,
\]
where a vector \(c\in U^\perp\) acts on \(v+U\) by the dot product \(c\cdot v\). This elementary duality allows linear functionals on the quotient space to be evaluated via dot products in the original space. It is precisely the link between circuit vectors and the normals to the obstruction hyperplanes constructed later.

\begin{example}
	Consider a \(2 \times 4\) matrix over \(\mathbb{F}_2\):
	\[
	G = \begin{pmatrix} 1 & 0 & 1 & 1 \\ 0 & 1 & 1 & 0 \end{pmatrix}. 
	\]
	Let \(E=\{1,2,3,4\}\) be the ground set. Since \(G\) has full row rank, its column matroid \(M(G)\) has rank \(2\).
	\begin{itemize}
		\item \textbf{Circuits and flat:} The columns \(1\) and \(4\) are identical, so \(C_1=\{1,4\}\) is a two-element circuit. Similarly, \(C_2=\{1,2,3\}\) and \(C_3=\{2,3,4\}\) are three-element circuits. The set \(\{1,4\}\) is a flat, as it contains all columns in its span. (If we remove column 4, the remaining matroid becomes the uniform matroid \(U_{2,3}\)).
		\item \textbf{Derived matroid:} The corresponding circuit vectors (in \(\ker G\)) are \(v_1=(1,0,0,1)\), \(v_2=(1,1,1,0)\), and \(v_3=(0,1,1,1)\). The represented derived matroid \(\delta_G M(G)\) is the column matroid of the \(4 \times 3\) matrix \([v_1^{\mathsf T} \mid v_2^{\mathsf T} \mid v_3^{\mathsf T}]\). Since \(v_1+v_2+v_3=0\), the set \(\{C_1,C_2,C_3\}\) forms a circuit in the derived matroid itself.
		\item \textbf{Duality:} Take \(c=v_2=(1,1,1,0) \in \ker G\) and \(x=(1,1,0,0)\). For any representative \(x' = x + u\) of the coset \(x+U\) (e.g., \(u=(1,0,1,1) \in U\)), the dot product \(c \cdot x'\) remains \(0\). This illustrates the well-defined action of \((V/U)^* \simeq \ker G\).
	\end{itemize}
\end{example}

\subsection{Hyperplane arrangements and characteristic polynomials}
\label{subsec:prelim-arrangements}

A \emph{central hyperplane arrangement} in a vector space \(V\) is a finite family of hyperplanes (i.e., kernels of nonzero linear forms \(V \to \mathbb{F}_q\)). Proportional forms define the same hyperplane; an arrangement will be understood as a collection of distinct hyperplanes. Its \emph{normal matroid} is represented by choosing one nonzero normal vector for each hyperplane. The arrangement has full rank if these normals span the dual space \(V^*\).

For a rank-\(s\) matroid \(M\) on a finite ground set \(E\), its \emph{characteristic polynomial} is defined by
\begin{equation}\label{eq:chi-subset}
	\chi_M(t)=\sum_{A\subseteq E}(-1)^{|A|}t^{s-r_M(A)}.
\end{equation}
If \(M\) has a loop, this polynomial is identically zero. If \(M\) is loopless and represented by a full-row-rank matrix over \(\F_q\) with \(s\) rows, then \(\chi_M(q^m)\) counts the vectors in \(\F_{q^m}^s\) that avoid the associated hyperplanes, as proved in Proposition~\ref{prop:full-support-count}. 

\begin{example}
	Let \(M\) be the uniform matroid \(U_{2,3}\). Here the ground set is \(E = \{1, 2, 3\}\) and the rank is \(s = 2\). By definition, every subset of size at most \(2\) is independent, while the only dependent set is the full set \(E\) itself. 
	
	To compute the characteristic polynomial \(\chi_M(t)\) using \eqref{eq:chi-subset}, we group the \(2^3 = 8\) subsets \(A \subseteq E\) by their sizes:
	\begin{itemize}
		\item \(|A| = 0\): \(A = \emptyset\), \(r_M(A) = 0\). The term is \((-1)^0 t^{2-0} = t^2\).
		\item \(|A| = 1\): There are \(3\) such subsets, and all have rank \(1\). The terms are \(3 \cdot (-1)^1 t^{2-1} = -3t\).
		\item \(|A| = 2\): There are \(3\) such subsets, and all have rank \(2\). The terms are \(3 \cdot (-1)^2 t^{2-2} = 3\).
		\item \(|A| = 3\): There is \(1\) such subset (the whole set), which is dependent, so its rank is \(2\). The term is \((-1)^3 t^{2-2} = -1\).
	\end{itemize}
	
	Summing these terms gives the characteristic polynomial:
	\[
	\chi_M(t) = t^2 - 3t + 3 - 1 = t^2 - 3t + 2 = (t-1)(t-2).
	\]
\end{example}

\subsection{Coset distances, covering radii, and syndromes}
\label{subsec:prelim-cosets}

Let \(K/\mathbb{F}_q\) be an extension field. For a coset \(\bar{v} = v + U_K\) in \(K^n/U_K\), define its \emph{coset distance} and the \emph{covering radius} by
\begin{equation}\label{eq:coset-distance}
	\delta_{U,K}(\bar{v}) = \min_{u\in U_K}\operatorname{wt}(v-u), \qquad
	\rho(U_K) = \max_{\bar{v}\in K^n/U_K}\delta_{U,K}(\bar{v}).
\end{equation}
A coset at distance \(\rho(U_K)\) is called a \emph{deep-hole coset}; its elements are called deep holes. 

An \emph{information set} of \(U_K\) is a subset \(S\subseteq[n]\) of size \(k-1\) for which the coordinate projection \(U_K\to K^S\) is an isomorphism; equivalently, the columns indexed by \(S\) form a nonsingular square submatrix of a generator matrix of \(U_K\).

To bound this distance, recall that \(U_K\) has dimension \(k-1\), so an information set of \(U_K\) has size \(k-1\). For any candidate vector \(v\), we can subtract a suitable word of \(U_K\) to make it zero on such an information set. After this matching, only the remaining \(n - (k-1) = n-k+1 = r\) coordinates can possibly be nonzero. Consequently, every coset has a representative of weight at most \(r\), yielding the redundancy bound:
\begin{equation}\label{eq:redundancy}
	\delta_{U,K}(\bar{v}) \le \rho(U_K) \le r.
\end{equation}

To explicitly represent these cosets, let \(H\) be an \(r \times n\) parity-check matrix of \(U\), with full row rank, so that \(U=\{v\in\F_q^n:Hv^{\mathsf T}=0\}\). The syndrome of \(v\) is defined as \(Hv^{\mathsf T}\). Since the kernel of this map is precisely \(U_K\), the first isomorphism theorem provides a canonical identification between the abstract quotient space and the concrete vector space \(K^r\):
\[
K^n/U_K \xrightarrow{\sim} K^r,
\qquad v+U_K \longmapsto Hv^{\mathsf T}.
\]
This identification represents quotient vectors in \(K^n/U_K\) as syndrome vectors in \(K^r\); the corresponding one-step directions are points of \(\PP(K^r)\). In this syndrome space, the obstruction hyperplanes can be described geometrically using the columns of \(H\) (as detailed in Section \ref{sec:covering}).

\section{Main results}\label{sec:main}

\subsection{Statement of the main theorem}\label{subsec:main-statement}

We first state the main theorem. Its proof will be completed in Section \ref{subsec:main-counting}.

\begin{theorem}\label{thm:critical-MDS}
Let $U\le\F_q^n$ have dimension $k-1$, $2\le k\le n$, and let \(G\) be a \((k-1) \times n\) generator matrix of \(U\). Put $r=n-k+1$.

If $d(U)\le r-1$, then no finite scalar extension of $U$ has an $[n,k]$ one-step MDS completion.

Assume $d(U)\ge r$.  Then for every $m\ge1$, with $Q=q^m$,
\begin{equation}\label{eq:main-count}
 \#\{W\le\F_Q^n:U_Q\subset W,\ \dim W=k,\ W\text{ MDS}\}
 =\frac{\chi_{\delta_GM(G)}(Q)}{Q-1}.
\end{equation}
Equivalently, $U_{q^m}$ admits such a completion if and only if
$\chi_{\delta_GM(G)}(q^m)>0$. 
\end{theorem}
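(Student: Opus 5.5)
Fix the generator matrix $G$ of $U$ and, for $v\in K^n$ ($K=\F_Q$), let $A_v$ be the $k\times n$ matrix obtained by appending the row $v$ to $G$. By Proposition~\ref{prop:MDS-plucker}, $U_K+\langle v\rangle$ is a $k$-dimensional MDS code exactly when every maximal minor $\det (A_v)_I$, $|I|=k$, is nonzero. In that case $v\notin U_K$ automatically, since some minor is nonzero. Expanding each minor along the last row gives a linear form $\lambda_I(v)=\sum_{j\in I}\pm\det G_{I\setminus\{j\}}\,v_j$. This form vanishes on $U_K$, because appending a row of $U_K$ makes $A_v$ singular.

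\textbf{Forms and circuits.} Each $\lambda_I$ is given by a vector $c_I\in\ker G=(V/U)^*$ with $\supp(c_I)\subseteq I$. I will show that if $\rank G_I=k-1$, then $c_I$ is a circuit vector. Indeed, the $k$ columns of $G_I$ in $\F_q^{k-1}$ are dependent. Its support is the unique circuit $C\subseteq I$ of $M(G)$: this circuit is unique because $I$ has nullity one. The vector $c_I$ is nonzero, since some $(k-1)$-minor of $G_I$ is nonzero. By uniqueness of circuit vectors up to scalars, $c_I$ is proportional to $c_C$. If instead $\rank G_I<k-1$, then $\lambda_I\equiv0$, and no completion exists.

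\textbf{Nonexistence for $d(U)\le r-1$.} By Lemma~\ref{lem:zero-rank}, $d(U)\le r-1$ means some set $S$ with $|S|\ge n-r+1=k$ has $\rank G_S<k-1$. Any $k$-subset $I\subseteq S$ then has $\lambda_I\equiv 0$ over every extension. This gives the first assertion.

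\textbf{Counting for $d(U)\ge r$.} By the same lemma, every $k$-set $I$ has $\rank G_I=k-1$. So the forms $\lambda_I$ are, up to scalars, exactly the circuit vectors $c_C$ for circuits $C$ of size at most $k$. I must check that every circuit of $M(G)$ occurs in this way. A circuit $C$ has $|C|\le \rank+1=k$, and since $k\le n$ it can be padded to a $k$-set $I\supseteq C$. Then $C\subseteq I$ is the unique circuit in $I$, so $c_I\sim c_C$. Hence the MDS condition says precisely that the syndrome/coset $\bar v\in K^n/U_K\cong K^r$ avoids the hyperplanes $c_C^{\perp}$, with one hyperplane for each circuit of $M(G)$.

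These normals are the columns of the matrix representing $\delta_GM(G)$. This matroid is loopless, because circuit vectors are nonzero. Repeated hyperplanes correspond to parallel elements, which do not affect the complement count. The normals span $(V/U)^*=\ker G$, because circuit vectors span the kernel: take a basis and its fundamental circuits.

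So Proposition~\ref{prop:full-support-count}, which counts avoiding vectors via $\chi$ evaluated at $Q$ for a loopless full-rank representation, shows that the number of good cosets $\bar v\in K^r$ is $\chi_{\delta_GM(G)}(Q)$. This works directly in the quotient of dimension $r$, which equals the rank of the derived matroid. Two cosets give the same $W$ iff they differ by a scalar in $K^\times$, and nonzero scaling preserves goodness. Dividing by $Q-1$ yields \eqref{eq:main-count}, and the existence criterion follows immediately.

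\textbf{Main obstacle.} I expect the hardest step to be the identification of the set of hyperplanes $\{\ker\lambda_I\}$ with the full ground set of $\delta_GM(G)$, including the claim that it has full rank $r$. This needs the padding argument, which uses $d(U)\ge r$, and care that parallel circuit vectors do not change $\chi$. I would first verify the count directly via inclusion–exclusion over subsets of circuits. That reproduces \eqref{eq:chi-subset}, because the number of $\bar v$ in $\bigcap_{C\in A}c_C^\perp$ is $Q^{r-r(A)}$.
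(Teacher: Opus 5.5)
Your proposal is correct and follows essentially the same route as the paper. The maximal minors of the appended matrix are linear forms on $K^n/U_K$, represented by cofactor vectors $c_I\in\ker G$; one of them vanishes identically when $d(U)\le r-1$; otherwise they are, up to scalars, exactly the circuit vectors, and inclusion--exclusion over this full-rank arrangement gives $\chi_{\delta_GM(G)}(Q)$ admissible cosets, which you divide by $Q-1$. The differences are minor: you pad a circuit to a $k$-set and use $d(U)\ge r$ to get nullity one (the paper extends $C\setminus\{e\}$ to a basis, which needs no hypothesis), you prove spanning via fundamental circuits rather than support reduction, and you count the complement of the distinct hyperplanes directly instead of passing through $\cP(U)$ and deleting parallel columns.
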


\subsection{The Pl\"ucker extension code and the full-support criterion}
\label{subsec:main-plucker}

We first work over the base field. Choose a full-rank \((k-1)\times n\) generator matrix \(G\) of \(U\). Appending a row \(v\in V=\F_q^n\) gives
\begin{equation}\label{eq:appended-matrix}
	G(v)=\begin{pmatrix}G\\v\end{pmatrix}.
\end{equation}
For \(v\notin U\), the row space of \(G(v)\) is a candidate completion. For a \(k\)-subset \(I=\{i_1<\cdots<i_k\}\subseteq[n]\), let \(G(v)_I\) be the submatrix with column set \(I\), and define
\begin{equation}\label{eq:ell}
	\ell_I:E_U=V/U\longrightarrow\F_q,
	\qquad \ell_I(v+U)=\det G(v)_I.
\end{equation}

To understand this form, expanding the determinant along the last row shows that \(\ell_I\) is a linear functional in \(v\). Furthermore, it is well-defined on the quotient space \(V/U\): replacing \(v\) by \(v+u\) (for some \(u \in U\)) amounts to adding a row from \(U\) to the last row, which leaves the determinant unchanged. If \(G\) is replaced by another generator matrix of the same \(U\), every \(\ell_I\) is multiplied by the same nonzero scalar (the determinant of the change-of-basis matrix). Thus the kernels of these forms are intrinsic to the code \(U\). A nonzero form defines a proper hyperplane, whereas a zero form has kernel equal to all of \(E_U\).

By Proposition~\ref{prop:MDS-plucker}, an appended generator
matrix gives an MDS code exactly when all its maximal minors are
nonzero. Collecting these minors as coordinates of an auxiliary linear
code turns the completion problem into a full-weight problem. The code
has length \(\binom nk\), with one coordinate for each maximal minor.

\begin{definition}\label{def:plucker-extension-code}
	Define
	\begin{equation}\label{eq:plucker-code}
		\Phi_U:E_U\longrightarrow \F_q^{\binom{n}{k}},
		\qquad
		\bar v\longmapsto
		\bigl(\ell_I(\bar v)\bigr)_{|I|=k},
	\end{equation}
	and let
	\[
	\mathcal P(U):=\operatorname{im}\Phi_U.
	\]
	We call \(\mathcal P(U)\) the \emph{Pl\"ucker extension code} of \(U\).
\end{definition}

Every word of \(\mathcal P(U)\) records the maximal minors of \(G(v)\) for one quotient vector. Moreover, the map \(\Phi_U\) is injective. Indeed, if \(\Phi_U(\bar v)=0\), all \(k\times k\) minors of \(G(v)\) vanish, so \(\operatorname{rank} G(v)<k\). Since \(G\) has rank \(k-1\), this forces \(v\in U\) and thus \(\bar v=0\). It follows that
\[
\dim\mathcal P(U)=\dim(V/U)=r.
\]
For an extension field \(K/\F_q\), the natural identification
\(E_U\otimes_{\F_q}K\simeq K^n/U_K\) gives the scalar-extended forms
\[
\ell_{I,K}:K^n/U_K\longrightarrow K,
\qquad \ell_{I,K}(v+U_K)=\det G(v)_I.
\]
These forms are well-defined because adding any \(u\in U_K\) to the last row leaves the determinant unchanged. When the field is clear, we also write \(\ell_I\) for \(\ell_{I,K}\). Write \(\mathcal P(U)_K=\mathcal P(U)\otimes_{\F_q}K\) and denote by \(\Phi_{U,K}\) the scalar extension of \(\Phi_U\). The corresponding maximal minors are the coordinates of \(\Phi_{U,K}(\bar v)\).

\begin{example}\label{ex:plucker-construction}
	Consider the code
	\[
	U=\langle(1,1,1)\rangle\le\F_2^3
	\]
	from Example~\ref{ex:three-coordinates}.
	A generator matrix of $U$ is $G=(1,1,1)$.
	Every coset in $\F_2^3/U$ has a unique representative
	$v=(0,a,b)$, where $a,b\in\F_2$.
	Appending this row gives
	\[
	G(v)=
	\begin{pmatrix}
		1&1&1\\
		0&a&b
	\end{pmatrix}.
	\]
	Its three maximal minors, indexed by
	$\{1,2\}$, $\{1,3\}$, and $\{2,3\}$,
	are $a$, $b$, and $a+b$, respectively.
	Thus
	\[
	\Phi_U(v+U)=(a,b,a+b),
	\]
	and hence
	\[
	\mathcal P(U)
	=\{(a,b,a+b):a,b\in\F_2\}.
	\]
	In particular, $\mathcal P(U)$ is a $[3,2]_2$
	linear code with generator matrix
	\[
	\begin{pmatrix}
		1&0&1\\
		0&1&1
	\end{pmatrix}.
	\]
	It has no full-support word over $\F_2$.
	However, over $\F_4$, the scalar-extended
	code contains the full-weight word
	$(1,\alpha,1+\alpha)$, where
	$\alpha\in\F_4\setminus\F_2$.
\end{example}

The following correspondence is the central reduction used throughout
the paper.

\begin{theorem}\label{thm:full-MDS}
	For every extension field \(K/\F_q\), the following are equivalent:
	\begin{enumerate}[label=\textup{(\roman*)},leftmargin=2.5em]
		\item \(\mathcal{P}(U)_K\) contains a word of weight \(\binom{n}{k}\);
		\item \(U_K\) is contained in an \([n,k]_K\) MDS code.
	\end{enumerate}
	More precisely, the MDS completions of \(U_K\) are in bijection with the one-dimensional \(K\)-subspaces of \(\mathcal{P}(U)_K\) spanned by full-weight words.
\end{theorem}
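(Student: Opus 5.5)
The plan is to construct an explicit map from lines in \(\mathcal P(U)_K\) to completions and check that it is a bijection. Every word \(c\) of \(\mathcal P(U)_K\) has the form \(c=\Phi_{U,K}(\bar v)\) for some \(\bar v\in K^n/U_K\). By scalar extension of the injectivity argument given for \(\Phi_U\), the map \(\Phi_{U,K}\) is an injective \(K\)-linear isomorphism onto \(\mathcal P(U)_K\): if all maximal minors of \(G(v)\) vanish, then \(\rank G(v)<k\), and since \(G\) has rank \(k-1\) over \(K\), this forces \(v\in U_K\). Consequently, one-dimensional \(K\)-subspaces of \(\mathcal P(U)_K\) correspond bijectively to points of \(\PP(K^n/U_K)\), that is, to one-step directions. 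By the discussion in Section~\ref{subsec:prelim-plucker}, these directions are in turn in bijection with the \(k\)-dimensional subspaces \(W=U_K+\langle v\rangle\) containing \(U_K\). Composing these two bijections, I send the line \(K c\) with \(c=\Phi_{U,K}(\bar v)\) to the code \(W_{\bar v}\), the row space of \(G(v)\).

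Next, I verify that this bijection restricts correctly. A line \(Kc\) with \(c\ne0\) consists, apart from \(0\), of words all having the same support, so "spanned by a full-weight word" is a well-defined property of the line. For \(v\notin U_K\), the matrix \(G(v)\) is a full-rank \(k\times n\) generator matrix of \(W_{\bar v}\), and the coordinates of \(\Phi_{U,K}(\bar v)\) are exactly its \(k\times k\) minors \(\det G(v)_I\). Applying Proposition~\ref{prop:MDS-plucker} over \(K\), \(W_{\bar v}\) is MDS if and only if every minor is nonzero, that is, if and only if \(\wt(\Phi_{U,K}(\bar v))=\binom nk\). Conversely, any MDS completion \(W\supseteq U_K\) of dimension \(k\) equals \(U_K+\langle v\rangle\) for some \(v\notin U_K\), so it arises from a full-weight line. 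This gives the precise bijection, and the equivalence of (i) and (ii) follows immediately: a full-weight word exists if and only if the set of such lines is nonempty.

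The only point needing care is well-definedness over \(K\). I must check that the line \(K\Phi_{U,K}(\bar v)\) depends only on \(W\), not on the chosen \(v\). Replacing \(v\) by \(\lambda v+u\) with \(\lambda\in K^\times\) and \(u\in U_K\) scales every minor by \(\lambda\), by multilinearity and by the invariance of the determinant under adding rows of \(G\). The line therefore depends only on \(W\). In addition, the identification \(E_U\otimes K\simeq K^n/U_K\) makes \(\Phi_{U,K}\) genuinely the determinant map over \(K\). I expect no real obstacle; the argument is essentially bookkeeping once Proposition~\ref{prop:MDS-plucker} and the injectivity of \(\Phi_{U,K}\) are in place.
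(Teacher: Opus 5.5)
Your proposal is correct and follows essentially the same route as the paper. Both arguments write each word of $\mathcal P(U)_K$ as $\Phi_{U,K}(\bar v)$ and use Proposition~\ref{prop:MDS-plucker} to match full weight with the MDS property of $U_K+\langle v\rangle$; both then use injectivity of $\Phi_{U,K}$ to pass from scalar classes of $\bar v$ to lines in $\mathcal P(U)_K$. Your explicit checks of injectivity over $K$ and of invariance under $v\mapsto\lambda v+u$ supply details the paper leaves implicit.
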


\begin{proof}
	After scalar extension, every word of \(\mathcal{P}(U)_K\) has the form
	\[
	\Phi_{U,K}(\bar v)
	=\bigl(\ell_I(\bar v)\bigr)_{|I|=k},
	\]
	for some \(\bar v\in E_U\otimes_{\F_q}K\). This word has weight
	\(\binom{n}{k}\) if and only if all the corresponding \(k\times k\)
	maximal minors of \(G(v)\) are nonzero. By
	Proposition~\ref{prop:MDS-plucker}, this is equivalent to the row space
	\(U_K+\langle v\rangle\) being an \([n,k]_K\) MDS code.
	
	To see the bijection, note that two nonzero vectors \(\bar v\) and \(\bar v'\) determine the
	same completion if and only if they differ by a nonzero scalar in
	\(K^\times\). Since \(\Phi_{U,K}\) is injective, the same holds for their
	images in \(\mathcal{P}(U)_K\), which completes the proof.
\end{proof}

\begin{remark}
	Although not required for the counting proof, the same construction admits a useful basis-free description via exterior algebra. Let \(u_1,\ldots,u_{k-1}\) be the rows of \(G\) and set
	\[
	\xi_U=u_1\wedge\cdots\wedge u_{k-1}
	\in\bigwedge^{k-1}U.
	\]
	The map
	\[
	V/U\longrightarrow\bigwedge^kV,\qquad
	v+U\longmapsto\xi_U\wedge v
	\]
	is well-defined and injective, and its coordinates with respect to the standard exterior basis \(\{e_I\}_{|I|=k}\) are precisely \(\det G(v)_I\). Consequently, its image is the canonically embedded \(\mathcal{P}(U)\), whose projectivization is the linear Schubert subspace
	\[
	\{[W]\in\operatorname{Gr}(k,V):U\subseteq W\}.
	\]
	See \cite{HarrisAG} for this Grassmannian description.
	When \(k\le n-1\), the canonical embedding even determines the seed:
	\[
	U=\{x\in V:x\wedge c=0\text{ for all }c\in\mathcal{P}(U)\}.
	\]
	Indeed, every \(x\in U\) annihilates \(\xi_U\wedge v\). Conversely, if \(x\notin U\), choose \(v\notin U+\langle x\rangle\); then \(x\wedge\xi_U\wedge v\ne0\). This observation provides additional geometric insight, though it is not required for the subsequent arguments.
\end{remark}

\subsection{The distance obstruction}\label{subsec:main-obstruction}

Before counting the allowed directions, we must ensure that none of the linear forms \(\ell_I\) vanish identically. Indeed, by Theorem~\ref{thm:full-MDS}, an MDS completion requires a full-weight word in the Plücker extension code, meaning all coordinates must be nonzero. A zero form would impose an impossible condition over {every} scalar extension. For any subset \(I\subseteq[n]\), we denote by \(G_I\) the submatrix of \(G\) formed by the columns indexed by \(I\).

\begin{lemma}\label{lem:determinantal-form}
	For every subset \(I\subseteq[n]\) of size \(k\),
	\begin{equation}\label{eq:loop-rank}
		\ell_I=0
		\quad\Longleftrightarrow\quad
		\rank G_I\le k-2.
	\end{equation}
\end{lemma}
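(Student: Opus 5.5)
The plan is to analyse the form \(\ell_I(v)=\det G(v)_I\) directly through a cofactor expansion along the appended last row. Write the columns of \(G_I\) as \(g_{i_1},\dots,g_{i_k}\in\F_q^{k-1}\). Expanding along the last row gives
\[
\ell_I(v+U)=\sum_{j=1}^{k}(-1)^{k+j}\,v_{i_j}\,\det G_{I\setminus\{i_j\}},
\]
where each \(G_{I\setminus\{i_j\}}\) is a \((k-1)\times(k-1)\) submatrix of \(G\). Since the \(v_{i_j}\) range freely over \(\F_q\) as \(v\) ranges over \(V\), the form \(\ell_I\) is identically zero if and only if every one of these \(k\) cofactors vanishes. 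Equivalently, \(\ell_I=0\) exactly when all \((k-1)\)-subsets of the columns of \(G_I\) are linearly dependent.

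The remaining step is the elementary linear-algebra fact that all maximal \((k-1)\times(k-1)\) minors of the \((k-1)\times k\) matrix \(G_I\) vanish if and only if \(\rank G_I\le k-2\).

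\textbf{If \(\rank G_I\le k-2\):} every \(k-1\) columns of \(G_I\) span a space of dimension at most \(k-2\). Hence they are dependent, and every such minor is zero.

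\textbf{If \(\rank G_I=k-1\):} the \(k\) columns of \(G_I\) span \(\F_q^{k-1}\), so some \(k-1\) of them form a basis. The corresponding minor \(\det G_{I\setminus\{i_j\}}\) is then nonzero, the coefficient of \(v_{i_j}\) in the expansion is nonzero, and choosing \(v=e_{i_j}\) gives \(\ell_I(e_{i_j}+U)\neq0\).

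Combining the two directions yields \eqref{eq:loop-rank}.

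An alternative route, which avoids the expansion, runs through the rank of the augmented matrix. \(\ell_I(\bar v)\ne0\) holds iff \(\rank G(v)_I=k\), and \(\rank G(v)_I\le\rank G_I+1\). So if \(\rank G_I\le k-2\), then \(G(v)_I\) is singular for every \(v\). Conversely, if \(\rank G_I=k-1\), the row space of \(G_I\) is a hyperplane of \(\F_q^I\), and any \(v\) whose restriction to \(I\) lies outside that hyperplane makes \(G(v)_I\) invertible. I would present the expansion argument, perhaps mentioning this rank view as a remark.

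I do not expect a genuine obstacle here. The only point needing care is that \(\ell_I\) is viewed as a form on \(V/U\): its vanishing as a form is equivalent to its vanishing for all \(v\in V\), which the well-definedness discussion before Definition~\ref{def:plucker-extension-code} already guarantees.
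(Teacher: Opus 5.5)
Your proposal is correct and follows essentially the same route as the paper: a cofactor expansion of \(\det G(v)_I\) along the appended row, showing that \(\ell_I=0\) exactly when all \((k-1)\times(k-1)\) minors of \(G_I\) vanish, i.e.\ when \(\rank G_I\le k-2\). You also prove the minor--rank equivalence that the paper only asserts, and your alternative argument via the rank of the augmented matrix is likewise valid.
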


\begin{proof}
	Write \(I=\{i_1<\cdots<i_k\}\). Expanding along the last row gives
	\[
	\det\big( G(v)_I \big)
	=
	\sum_{t=1}^k
	(-1)^{k+t}
	\big(\det G_{I\setminus\{i_t\}}\big)\,v_{i_t}.
	\]
	Thus, the form \(\ell_I\) vanishes identically if and only if all the \((k-1)\times(k-1)\) minors of \(G_I\) vanish. This occurs precisely when \(\rank G_I\le k-2\).
\end{proof}

Combining Lemma~\ref{lem:determinantal-form} with the distance-rank formula \eqref{eq:distance-rank}, we obtain a purely structural criterion for the seed code \(U\). For an integer \(k \ge 0\), we denote by \(\binom{[n]}{k}\) the family of all \(k\)-element subsets of \([n]\).

\begin{corollary}\label{cor:structural-obstruction}
	The following are equivalent:
	\begin{enumerate}[label=\textup{(\roman*)},leftmargin=2.5em]
		\item \(\ell_I\ne0\) for every \(I\in\binom{[n]}{k}\);
		\item \(\rank G_I=k-1\) for every \(I\in\binom{[n]}{k}\);
		\item \(d(U)\ge r\).
	\end{enumerate}
	If these conditions fail, no scalar extension of \(U\) admits an \([n,k]\) MDS completion.
\end{corollary}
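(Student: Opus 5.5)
The plan is to prove the cycle (i)$\Leftrightarrow$(ii) and (ii)$\Leftrightarrow$(iii), then deduce the final sentence from Theorem~\ref{thm:full-MDS}.

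The equivalence (i)$\Leftrightarrow$(ii) is immediate from Lemma~\ref{lem:determinantal-form}. Since $G_I$ has $k-1$ rows, its rank is at most $k-1$. So $\ell_I\neq0$ exactly when $\rank G_I=k-1$, and quantifying over all $I\in\binom{[n]}{k}$ gives the claim.

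For (ii)$\Leftrightarrow$(iii) I will use the distance-rank formula \eqref{eq:distance-rank} with $A=G$ and $s=k-1$, which gives $d(U)=n-\max\{|S|:\rank G_S<k-1\}$. Hence $d(U)\ge r=n-k+1$ holds if and only if every $S$ with $\rank G_S<k-1$ has $|S|\le k-1$. Equivalently, every $S$ with $|S|\ge k$ has $\rank G_S=k-1$. Since rank is monotone under enlarging the column set and bounded by $k-1$, this is in turn equivalent to $\rank G_I=k-1$ for all $|I|=k$. The one point needing care is this monotonicity step: a $k$-set of full rank forces every superset to have full rank, and conversely every $k$-set is itself among the sets considered. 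I expect this bookkeeping with the off-by-one between $k$, $k-1$ and $r$ to be the only delicate part, and it is routine.

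Finally, suppose the conditions fail. Then some $\ell_I$ is identically zero on $E_U$. By the scalar-extension formula $\ell_{I,K}(v+U_K)=\det G(v)_I$, it is also zero on $K^n/U_K$ for every extension $K$. In terms of the expansion in Lemma~\ref{lem:determinantal-form}, its coefficients, the $(k-1)$-minors of $G_I$, are all zero in $\F_q$ and hence in $K$. Therefore no word of $\mathcal P(U)_K$ has full weight, and Theorem~\ref{thm:full-MDS} shows that $U_K$ has no $[n,k]$ MDS completion.
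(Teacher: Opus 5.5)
Your proposal is correct and follows essentially the same route as the paper: Lemma~\ref{lem:determinantal-form} for (i)$\Leftrightarrow$(ii), the distance-rank formula \eqref{eq:distance-rank} for (ii)$\Leftrightarrow$(iii), and Theorem~\ref{thm:full-MDS} for the final claim, using that an identically zero form stays zero under scalar extension. Your remarks that $\rank G_I\le k-1$ and that full rank on a $k$-set passes to every superset make explicit two steps the paper leaves implicit.
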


\begin{proof}
	First, Lemma~\ref{lem:determinantal-form} directly establishes the equivalence (i)\(\Leftrightarrow\)(ii). 
	
	Next, by \eqref{eq:distance-rank}, condition (ii) means that the zero set of any nonzero word in \(U\) has size at most \(k-1\). This is precisely the statement that \(d(U)\ge r\), proving (ii)\(\Leftrightarrow\)(iii).
	
	Finally, if any determinant form is identically zero, it remains zero after any scalar extension. By Theorem~\ref{thm:full-MDS}, this obstructs MDS completion over all extension fields.
\end{proof}

Since \(U\) has dimension \(k-1\), the Singleton bound yields \(d(U)\le n-(k-1)+1 = r+1\). Therefore, the condition \(d(U)\ge r\) means exactly that \(U\) is either MDS or almost MDS (AMDS). The condition is exactly that no coordinate of \(\cP(U)\) vanishes identically; a full-weight word may nevertheless require scalar extension.

\subsection{Identification of the derived matroid}
\label{subsec:main-identification}\label{sec:derived}

The full-weight condition requires all coordinate forms \(\ell_I\)
to be nonzero on the same quotient vector. To count such vectors, we
identify the distinct nonzero forms up to scalar and determine the
linear dependencies among them. The following construction relates
these forms to the circuits of the seed's column matroid.

While \(\ell_I\) is a linear form on the quotient space \(V/U\), the duality \((V/U)^*\simeq\ker G\) from Section~\ref{subsec:prelim-matroids} allows us to represent these forms concretely. For \(I=\{i_1<\cdots<i_k\}\subseteq[n]\), we define a vector \(c_I\in\F_q^n\) by
\[
(c_I)_{i_t}=(-1)^{k+t}\det G_{I\setminus\{i_t\}}
\quad(1\le t\le k),
\qquad (c_I)_j=0\quad(j\notin I).
\]
By Laplace expansion, \(c_I\) is a cofactor vector representing the form \(\ell_I\):
\begin{equation}\label{eq:cofactor}
	Gc_I^{\mathsf T}=0,\qquad
	\det G(v)_I=c_I\cdot v.
\end{equation}
This shows that \(c_I\in\ker G=U^\perp\) is a concrete representative of the form \(\ell_I\). When \(c_I\ne0\), it is a normal vector to the hyperplane \(\ker\ell_I\). Some \(c_I\) may be zero, and different nonzero \(c_I\) may be proportional, in which case they define the same hyperplane. The following theorem precisely identifies these distinct nonzero normal lines using the column circuits of \(G\).

\begin{theorem}\label{thm:derived-identification}
	Let \(G\) have full row rank \(k-1\), and let \(U\) be its row space. For every \(k\)-subset \(I\subseteq[n]\):
	\begin{enumerate}[label=\textup{(\roman*)},leftmargin=2.5em]
		\item \(c_I=0\) if and only if \(\rank G_I\le k-2\);
		\item if \(c_I\ne0\), then \(\supp(c_I)\) is the unique circuit of \(M(G)\) contained in \(I\);
		\item every circuit vector of \(M(G)\) occurs, up to a nonzero scalar, among the nonzero \(c_I\).
	\end{enumerate}
	
	Consequently, after deleting the zero forms and identifying proportional forms, the normal matroid of the arrangement defined by the nonzero \(\ell_I\) is exactly \(\delta_GM(G)\). This arrangement has rank \(r=n-k+1\). 
	
	If \(d(U)\ge r\), all the forms \(\ell_I\) are nonzero. Over any extension field, the projectivization of the complement of the scalar-extended arrangement parametrizes the MDS completions. Conversely, if \(d(U)<r\), some \(\ell_I\) vanishes identically, meaning no MDS completion exists over any scalar extension.
\end{theorem}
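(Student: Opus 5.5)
Part (i) is essentially Lemma~\ref{lem:determinantal-form} restated through \eqref{eq:cofactor}. The entries of \(c_I\) are, up to sign, the \((k-1)\times(k-1)\) minors \(\det G_{I\setminus\{i_t\}}\). These all vanish exactly when \(\rank G_I\le k-2\).

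For (ii), assume \(c_I\ne0\). Then \(\rank G_I=k-1\), so the \(k\) columns of \(G_I\) have a one-dimensional space of linear dependencies. Such a space contains a unique minimal dependent set, namely the unique circuit \(C\subseteq I\), and its dependencies are spanned by the circuit vector \(c_C\), extended by zero outside \(C\). By \eqref{eq:cofactor}, \(c_I\in\ker G\) and \(\supp(c_I)\subseteq I\). Hence \(c_I\) is a dependency among the columns indexed by \(I\), so it is a nonzero multiple of \(c_C\), and \(\supp(c_I)=C\).

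For (iii), take any circuit \(C\) of \(M(G)\). Then \(\rank G_C=|C|-1\). Since \(G\) has rank \(k-1\), augment \(C\) greedily with columns from \([n]\setminus C\), each raising the rank by one, until the rank reaches \(k-1\). The result is a set \(I\supseteq C\) with \(\rank G_I=k-1\) and \(|I|=(|C|-1)+(k-1-(|C|-1))+1=k\). Its unique circuit is \(C\), because every column added raised the rank. By (i) and (ii), \(c_I\) is then a nonzero multiple of \(c_C\).

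It follows that the distinct normal lines of the nonzero \(\ell_I\) are exactly the lines spanned by the circuit vectors, one for each circuit. Two different circuits have different supports, so their vectors are not proportional. The normal matroid is therefore the column matroid of the circuit vectors, which is \(\delta_GM(G)\) by definition.

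For the rank statement, I will show that the circuit vectors span \(\ker G\), which has dimension \(n-(k-1)=r\). Fix a basis \(B\) of \(M(G)\). The fundamental circuits of the elements \(e\notin B\) give \(n-(k-1)\) vectors in \(\ker G\). Each such vector has entry \(e\) nonzero and vanishes on the other non-basis elements, so these vectors are linearly independent. Hence the rank is \(r\), and the normals span \((V/U)^*\simeq\ker G\).

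The remaining assertions follow from earlier results. If \(d(U)\ge r\), Corollary~\ref{cor:structural-obstruction} shows that every \(\ell_I\) is nonzero. Over any extension \(K\), a point of \(K^n/U_K\) lies in the complement of all the scalar-extended hyperplanes exactly when \(\Phi_{U,K}(\bar v)\) has full weight. Theorem~\ref{thm:full-MDS} then identifies the projectivized complement with the set of MDS completions. The converse, for \(d(U)<r\), is the last clause of Corollary~\ref{cor:structural-obstruction}.

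The only delicate step is (iii), where the augmented set must have exactly \(k\) elements and must contain no circuit other than \(C\). The counting of rank increments above handles both requirements.
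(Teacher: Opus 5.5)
Your proof is correct. For (i)--(iii) it follows the paper's route: (i) and (ii) rest on the same one-dimensional kernel of \(G_I\). Your greedy augmentation of \(C\) in (iii) produces the same kind of set as the paper's \(I=B\cup\{e\}\), where \(B\supseteq C\setminus\{e\}\) is a basis; in both cases the key point is \(|I|-\rank G_I=1\).

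The one genuinely different step is the rank claim. The paper shows that circuit vectors span \(\ker G\) by induction on support size. For \(0\ne x\in\ker G\), it picks a circuit \(C\subseteq\supp(x)\) and subtracts a multiple of \(c_C\) to kill one coordinate without enlarging the support. You instead fix a basis \(B\) of \(M(G)\) and take the \(n-k+1\) fundamental-circuit vectors. Their restriction to the non-basis coordinates is diagonal with nonzero entries, so they are independent, and since \(\dim\ker G=r\) they form a basis of \(\ker G\).

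Your argument is shorter and gives an explicit basis of normals made of circuit vectors. The paper's argument proves directly that every kernel vector is a combination of circuit vectors, and uses the dimension count only to read off the value \(r\).

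You also spell out the final assertions, which the paper's proof leaves implicit: every \(\ell_I\) is nonzero when \(d(U)\ge r\), the projectivized complement parametrizes the completions, and the converse holds. You do this via Corollary~\ref{cor:structural-obstruction} and Theorem~\ref{thm:full-MDS}.
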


\begin{proof}
	{Part (i)} is a direct consequence of Lemma~\ref{lem:determinantal-form}.
	
	{For Part (ii)}, assume \(c_I\neq0\). Then \(\rank G_I=k-1\), which implies that the kernel \(\ker G_I\) is one-dimensional. The restriction of \(c_I\) to \(I\) is a nonzero vector in this kernel, meaning the columns of \(G_I\) indexed by \(\supp(c_I)\) are linearly dependent. 
	
	To see that this dependence is minimal, suppose a proper subset of \(\supp(c_I)\) were also dependent. Then there would exist a nonzero vector in \(\ker G_I\) whose support is strictly smaller than \(\supp(c_I)\). But since \(\ker G_I\) is one-dimensional, every nonzero vector in it is merely a scalar multiple of \(c_I\), and scaling cannot change the support size. This is a contradiction. Hence, \(\supp(c_I)\) is a minimal dependent set (a circuit). Moreover, the one-dimensionality of \(\ker G_I\) guarantees that this is the unique circuit contained in \(I\).
	
	{For Part (iii)}, let \(C\) be an arbitrary circuit of \(M(G)\), and choose an element \(e\in C\). Since \(C\) is minimal, \(C\setminus\{e\}\) is an independent set, which can be extended to a basis \(B\) of \(M(G)\). Because \(e\) lies in the span of \(C\setminus\{e\}\subseteq B\), adding \(e\) to \(B\) would create a cycle; therefore, \(e\notin B\). Now, define \(I=B\cup\{e\}\). Then \(|I|=k\) and \(\rank G_I=k-1\). By the independence of \(B\), \(C\) remains the unique circuit contained in \(I\). Applying Part (ii), the vector \(c_I\) is a nonzero scalar multiple of the circuit vector \(c_C\). Thus, every circuit vector occurs among the nonzero \(c_I\).
	
	By \eqref{eq:cofactor}, each nonzero \(c_I\) acts as a normal vector to the hyperplane \(\ker \ell_I\). While proportional vectors define the same hyperplane, vectors from distinct circuits cannot be proportional since their supports differ. Therefore, the distinct nonzero normal lines are in bijection with the circuits of \(M(G)\), and their vector matroid is exactly the derived matroid \(\delta_GM(G)\).
	
	We claim that the circuit vectors span \(\ker G\). Let \(0\neq x\in\ker G\). Its support must contain some circuit \(C\). Choose \(e\in C\). Since \(c_C\) is nonzero at the \(e\)-th coordinate, we can subtract a suitable scalar multiple of \(c_C\) from \(x\) to eliminate its \(e\)-th coordinate (akin to Gaussian elimination). Because \(\supp(c_C)\subseteq\supp(x)\), this operation strictly reduces the support size of \(x\) without introducing any new nonzero coordinates. By induction on the support size, \(x\) can be expressed as a linear combination of circuit vectors. Hence,
	\[
	\dim\langle c_C:C\text{ is a circuit of }M(G)\rangle = \dim\ker G = n-(k-1) = r.
	\]
	Thus, the arrangement has full rank \(r\).
\end{proof}

\begin{remark}
The represented derived matroid is attached to the fixed code $U$,
not to a choice of its basis. Two generator matrices of $U$ differ
by left multiplication by an element of
$\operatorname{GL}_{k-1}(\F_q)$; this does not alter $\ker G$
or the projective circuit vectors. Thus the isomorphism type of
$\delta_GM(G)$ and its characteristic polynomial are independent
of the generator matrix of $U$.
\end{remark}

\subsection{Exact counting of MDS completions}
\label{subsec:main-counting}\label{sec:count}

By Theorem~\ref{thm:full-MDS}, the MDS completions of $U_Q$
correspond to the one-dimensional subspaces of $\cP(U)_Q$ spanned
by full-weight words. Thus, counting completions reduces to counting
full-weight words in an auxiliary linear code and then identifying
nonzero scalar multiples. We first recall a general counting identity.

\begin{proposition}\label{prop:full-support-count}
	Let $C\le\F_q^N$ be a linear code of dimension $s$ with generator
	matrix $A$. For every $m\ge1$,
	\begin{equation}\label{eq:full-support-count}
		\#\{c\in C_{q^m}:\wt(c)=N\}=\chi_{M(A)}(q^m).
	\end{equation}
\end{proposition}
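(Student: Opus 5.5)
The plan is an inclusion--exclusion over coordinates, expressing the count of full-weight words through the ranks of column subsets of \(A\). Write \(K=\F_{q^m}\) and \(Q=q^m\). Since \(A\) has full row rank \(s\), the map \(x\mapsto xA\) is a bijection from \(K^s\) onto \(C_{q^m}\). Let \(a_e\) denote the \(e\)-th column of \(A\). For \(e\in[N]\), let
\[
H_e=\{x\in K^s: x a_e=0\}.
\]
Then \(xA\) has full weight exactly when \(x\notin\bigcup_e H_e\). So it suffices to count the complement of the arrangement \(\{H_e\}_{e\in[N]}\) in \(K^s\). Each \(H_e\) is a subspace of \(K^s\); it is all of \(K^s\) when \(a_e=0\). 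We do not pass to distinct hyperplanes, so loops and parallel columns stay in the count.

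By inclusion--exclusion,
\[
\#\Bigl(K^s\setminus\bigcup_{e}H_e\Bigr)=\sum_{S\subseteq[N]}(-1)^{|S|}\,\Bigl|\bigcap_{e\in S}H_e\Bigr|.
\]
The empty intersection is interpreted as \(K^s\). For each \(S\), the intersection \(\bigcap_{e\in S}H_e=\{x\in K^s: xA_S=0\}\) is the left kernel of \(A_S\) over \(K\). Its \(K\)-dimension is \(s-\rank_K A_S\).

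Rank is unchanged under scalar extension (the remark preceding Lemma~\ref{lem:zero-rank}), so \(\rank_K A_S=\rank A_S=r_{M(A)}(S)\). Hence each term equals \(Q^{\,s-r_{M(A)}(S)}\), and the alternating sum is exactly \eqref{eq:chi-subset} evaluated at \(t=Q\), with rank \(r(M(A))=s\). This gives \eqref{eq:full-support-count}.

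The degenerate case needs no separate treatment. If \(A\) has a zero column, then no word has full weight. Correspondingly, \(\chi_{M(A)}\) vanishes identically: the terms for \(S\) and \(S\cup\{e\}\) cancel when \(e\) is a loop. This agrees with the general argument.

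The only point requiring care is bookkeeping. The matroid \(M(A)\) must be taken on all \(N\) columns, including repeated ones, and its rank must equal \(s\). The full-row-rank hypothesis guarantees the latter and is used implicitly in the bijection \(K^s\cong C_{q^m}\). Beyond this I expect no real obstacle.
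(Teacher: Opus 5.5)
Your proposal is correct and follows the paper's proof: the isomorphism $x\mapsto xA$ from $\F_Q^s$ onto $C_Q$, the subspaces cut out by the columns $a_j$, the intersection sizes $Q^{s-r_{M(A)}(J)}$, and inclusion--exclusion giving \eqref{eq:chi-subset} at $t=Q$. Your loop-cancellation argument for the degenerate case spells out what the paper states briefly, namely that both sides vanish when $A$ has a zero column.
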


\begin{proof}
	Put $Q=q^m$. Since $A$ has row rank $s$, the map
	\[
	\F_Q^s\longrightarrow C_Q,\qquad x\longmapsto xA,
	\]
	is an isomorphism. Write the columns of $A$ as $a_1,\ldots,a_N$.
	Then $xA$ has full weight precisely when $xa_j\ne0$ for every
	$j\in[N]$.
	
	For each $j\in[N]$, put
	$E_j=\{x\in\F_Q^s:xa_j=0\}$. For any $J\subseteq[N]$,
	\[
	\left|\bigcap_{j\in J}E_j\right|
	=Q^{\,s-\rank A_J}
	=Q^{\,s-r_{M(A)}(J)}.
	\]
	Inclusion--exclusion therefore gives
	\[
	\begin{aligned}
		\#\{c\in C_Q:\wt(c)=N\}
		&=\#\left(\F_Q^s\setminus\bigcup_{j=1}^NE_j\right)\\
		&=\sum_{J\subseteq[N]}(-1)^{|J|}Q^{\,s-r_{M(A)}(J)}
		=\chi_{M(A)}(Q).
	\end{aligned}
	\]
	This also covers degenerate codes: if $A$ has a zero column, both
	sides are zero.
\end{proof}

We now apply this identity to the Pl\"ucker extension code. The
identification of its coordinate forms in Theorem~\ref{thm:derived-identification}
turns the general full-weight count into the claimed MDS completion
formula.

\begin{proof}[Proof of Theorem~\ref{thm:critical-MDS}]
	Put $Q=q^m$. If $d(U)<r$, Corollary~\ref{cor:structural-obstruction}
	shows that some determinant form $\ell_I$ vanishes identically.
	Hence no scalar extension of $U$ has an MDS completion.
	
	Assume $d(U)\ge r$. Choose a basis of $E_U$ and let $A_U$ be the
	$r\times\binom{n}{k}$ generator matrix of $\cP(U)$ obtained from
	the map $\Phi_U$. Its columns are the coefficient vectors of the
	determinant forms $\ell_I$ in the chosen basis. By
	Theorem~\ref{thm:full-MDS} and
	Proposition~\ref{prop:full-support-count},
	\[
	\begin{aligned}
		N_U(Q)
		=\frac{\#\{c\in\cP(U)_Q:\wt(c)=\binom{n}{k}\}}{Q-1}
		=\frac{\chi_{M(A_U)}(Q)}{Q-1}.
	\end{aligned}
	\]
	Here the division by $Q-1$ identifies the nonzero scalar multiples
	that determine the same completion.
	
	By Corollary~\ref{cor:structural-obstruction}, every $\ell_I$ is
	nonzero, so $A_U$ has no zero column. Theorem~\ref{thm:derived-identification}
	identifies the distinct columns of $A_U$, up to nonzero scalars,
	with the circuit vectors representing $\delta_GM(G)$. Since deleting
	parallel elements does not change the characteristic polynomial of a
	loopless matroid,
	\[
	\chi_{M(A_U)}(Q)=\chi_{\delta_GM(G)}(Q).
	\]
	Consequently,
	\[
	N_U(Q)=\frac{\chi_{\delta_GM(G)}(Q)}{Q-1}.
	\]
	In particular, an MDS completion exists over $\F_{q^m}$ if and only
	if $\chi_{\delta_GM(G)}(q^m)>0$.
\end{proof}

\subsection{First extension degree, bounds, and invariance}
\label{subsec:main-consequences}

The counting formula determines the number of MDS completions over
each extension field. We next determine when full-weight words first
appear after scalar extension. The underlying principle applies to
arbitrary linear codes; it does not require an MDS hypothesis.

\begin{lemma}
	\label{lem:scalar-degree-monotonicity}
	Let $C\le\F_q^N$, where $N\ge1$, be a linear code of dimension
	$s\ge1$, and let $M$ be the column matroid of a generator matrix of
	$C$. The following are equivalent:
	\begin{enumerate}[label=\textup{(\roman*)},leftmargin=2.5em]
		\item $C$ is nondegenerate, that is, no coordinate vanishes
		identically on $C$;
		\item $C_{q^m}$ contains a full-weight word for some $m\ge1$.
	\end{enumerate}
	If these conditions hold, set
	\begin{equation}\label{eq:critical-support-cover}
		s_0(C):=\min\left\{t\ge1:\exists c_1,\ldots,c_t\in C,\
		\bigcup_{i=1}^t\supp(c_i)=[N]\right\}.
	\end{equation}
	Then
	\begin{equation}\label{eq:critical-threshold}
		\chi_M(q^m)>0
		\quad\Longleftrightarrow\quad m\ge s_0(C),
	\end{equation}
	and, in particular,
	\[
	\min\{m\ge1:\chi_M(q^m)>0\}=s_0(C)
	\le s.
	\]
\end{lemma}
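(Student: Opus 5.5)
By Proposition~\ref{prop:full-support-count}, $\chi_M(q^m)$ equals the number of full-weight words in $C_{q^m}$. It is therefore a nonnegative integer, and it is positive exactly when $C_{q^m}$ contains a full-weight word. The lemma thus reduces to a statement about full-weight words. The key step is the following characterization, for every $m\ge1$:
\[
C_{q^m}\text{ contains a full-weight word}\iff [N]\text{ is covered by the supports of at most } m \text{ words of } C.
\]
Together with the monotonicity of the right-hand side in $m$, this gives \eqref{eq:critical-threshold} and the formula for the minimum.

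For ($\Leftarrow$), let $c_1,\ldots,c_m\in C$ have supports covering $[N]$. Fix a basis $\alpha_1,\ldots,\alpha_m$ of $\F_{q^m}$ over $\F_q$ and set $c=\sum_i\alpha_i c_i\in C_{q^m}$. For each coordinate $j$, the value $c_j=\sum_i \alpha_i (c_i)_j$ has $\F_q$-coefficients $(c_i)_j$, not all zero. By linear independence of the $\alpha_i$, we get $c_j\ne0$. If fewer than $m$ words suffice, pad with zero words. So $c$ has full weight.

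For ($\Rightarrow$), write a full-weight $c\in C_{q^m}$ in the basis as $c=\sum_{i=1}^m\alpha_i c_i$. Here $c_i\in\F_q^N$ is obtained by expanding each coordinate. Each $c_i$ lies in $C$: if $H$ is a parity-check matrix of $C$ over $\F_q$, then $0=Hc^{\mathsf T}=\sum\alpha_i Hc_i^{\mathsf T}$ with $Hc_i^{\mathsf T}$ over $\F_q$, so each $Hc_i^{\mathsf T}=0$. Since $c_j\neq0$, some $(c_i)_j\ne0$, so the supports of $c_1,\ldots,c_m$ cover $[N]$.

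The equivalence (i)$\Leftrightarrow$(ii) follows from the same argument. If some $C_{q^m}$ has a full-weight word, the supports of words of $C$ cover $[N]$, so no coordinate vanishes identically. Conversely, nondegeneracy means each $j$ lies in the support of some word of $C$. The generator rows $g_1,\ldots,g_s$ then cover $[N]$: if $j$ were outside every $\supp(g_i)$, every word of $C$ would vanish at $j$. Applying ($\Leftarrow$) with $m=s$ gives a full-weight word over $\F_{q^s}$, and this also shows $s_0(C)\le s$. Thus $s_0(C)$ is well defined.

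The remaining step is monotonicity. Coverability by at most $m$ words implies coverability by at most $m+1$ words, trivially. Hence $\chi_M(q^m)>0$ iff $m\ge s_0(C)$, and the minimum equals $s_0(C)$. The only point needing care is ($\Rightarrow$), specifically that the $\F_q$-components of a word of $C_{q^m}$ lie in $C$; this is handled by the parity-check argument above.
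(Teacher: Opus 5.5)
Your proof is correct and follows essentially the same route as the paper. Both arguments expand a full-weight word of $C_{q^m}$ along an $\F_q$-basis of $\F_{q^m}$ to obtain a support cover of size at most $m$, and conversely combine a support cover with $\F_q$-independent coefficients to build a full-weight word; Proposition~\ref{prop:full-support-count} then translates this into positivity of $\chi_M(q^m)$. Your parity-check argument that the $\F_q$-components lie in $C$ simply makes explicit a step the paper takes for granted.
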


\begin{proof}
	If a coordinate vanishes on every word of $C$, it also vanishes
	on every word of each scalar extension. Thus the existence of a
	full-weight word over an extension implies that $C$ is nondegenerate.
	
	Conversely, suppose $C$ is nondegenerate. Every coordinate is
	nonzero on at least one word of $C$, so finitely many words have
	supports covering $[N]$ and $s_0(C)$ is well-defined. Moreover, the
	supports of the rows of a generator matrix cover $[N]$, giving
	$s_0(C)\le s$.
	
	Suppose that $w\in C_{q^m}$ has full weight. Choose an $\F_q$-basis
	$\alpha_1,\ldots,\alpha_m$ of $\F_{q^m}$. We may write
	\[
	w=\sum_{i=1}^m\alpha_i c_i,
	\qquad c_i\in C.
	\]
	Since $w$ has no zero coordinate, the supports of $c_1,\ldots,c_m$
	cover $[N]$. Therefore $s_0(C)\le m$.
	
	Conversely, choose $c_1,\ldots,c_{s_0(C)}\in C$ whose supports
	cover $[N]$. For any $m\ge s_0(C)$, take elements
	$\alpha_1,\ldots,\alpha_{s_0(C)}\in\F_{q^m}$ that are linearly
	independent over $\F_q$, and put
	\[
	w=\sum_{i=1}^{s_0(C)}\alpha_i c_i.
	\]
	At every coordinate, at least one coefficient from the corresponding
	$c_i$ is nonzero. Hence that coordinate of $w$ cannot vanish,
	because doing so would give a nontrivial $\F_q$-linear relation
	among the $\alpha_i$. Thus $w$ has full weight.
	
	We have proved that $C_{q^m}$ contains a full-weight word exactly
	when $m\ge s_0(C)$. By Proposition~\ref{prop:full-support-count},
	this is equivalent to $\chi_M(q^m)>0$, giving
	\eqref{eq:critical-threshold} and the stated formula for the
	least extension degree.
\end{proof}

In particular, every nondegenerate $[N,s]_q$ linear code has a
full-weight word over every extension of degree at least $s$. The
threshold property is not merely a consequence of field inclusion:
$\F_{q^m}$ need not be a subfield of $\F_{q^{m+1}}$.
For the Pl\"ucker extension code, the same principle determines the
first degree at which an MDS completion appears.

\begin{corollary}\label{cor:first-completion-degree}
	Assume $d(U)\ge r$, and put
	\[
	m_0(U):=\min\{m\ge1:\chi_{\delta_GM(G)}(q^m)>0\}.
	\]
	Then, for every $m\ge1$,
	\[
	U_{q^m}\text{ admits a one-step MDS completion}
	\quad\Longleftrightarrow\quad m\ge m_0(U).
	\]
	Moreover, $m_0(U)$ is the minimum number of words in the Pl\"ucker
	extension code $\cP(U)$ whose supports cover all its coordinates.
	In particular,
	\[
	1\le m_0(U)\le r.
	\]
\end{corollary}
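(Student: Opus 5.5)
This is a direct application of Lemma~\ref{lem:scalar-degree-monotonicity} to the Pl\"ucker extension code $C=\cP(U)\le\F_q^{N}$ with $N=\binom nk$ and $s=\dim\cP(U)=r$. The hypothesis $d(U)\ge r$ gives, by Corollary~\ref{cor:structural-obstruction}, that every coordinate form $\ell_I$ is nonzero. Hence $\cP(U)$ is nondegenerate, and condition (i) of the lemma holds. Let $M=M(A_U)$ be the column matroid of a generator matrix $A_U$ of $\cP(U)$. In the proof of Theorem~\ref{thm:critical-MDS} we already saw that $M$ is loopless and that its simplification is $\delta_GM(G)$. Since deleting parallel elements does not change the characteristic polynomial, $\chi_M=\chi_{\delta_GM(G)}$.

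With this identification, the threshold statement \eqref{eq:critical-threshold} reads
\[
\chi_{\delta_GM(G)}(q^m)>0\iff m\ge s_0(\cP(U)).
\]
Consequently $m_0(U)=s_0(\cP(U))$. This is precisely the minimum number of words of $\cP(U)$ whose supports cover all $\binom nk$ coordinates, which is the support-cover interpretation claimed. Lemma~\ref{lem:scalar-degree-monotonicity} also gives $s_0(C)\le s$, so $m_0(U)\le r$. The lower bound $m_0(U)\ge1$ is immediate from the definition of $m_0(U)$ as a minimum over $m\ge1$. Well-definedness of the minimum is guaranteed by the same lemma, since nondegeneracy yields some $m$ with $\chi_M(q^m)>0$.

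It remains to connect positivity of $\chi$ with existence of completions. By Theorem~\ref{thm:critical-MDS}, $N_U(q^m)=\chi_{\delta_GM(G)}(q^m)/(q^m-1)$. Hence $U_{q^m}$ admits a one-step MDS completion exactly when $\chi_{\delta_GM(G)}(q^m)>0$, which by the above is exactly when $m\ge m_0(U)$. Alternatively, one can bypass the counting formula and use Theorem~\ref{thm:full-MDS} directly: completions exist iff $\cP(U)_{q^m}$ has a full-weight word, and the proof of the lemma shows this happens iff $m\ge s_0(\cP(U))$.

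The only step needing care is the identification $\chi_M=\chi_{\delta_GM(G)}$. One must check that the columns of $A_U$ are the coordinate vectors of the $\ell_I$, which are nonzero and, up to scalars, run exactly over the circuit vectors $c_C$. This is Theorem~\ref{thm:derived-identification}(ii)--(iii) together with \eqref{eq:cofactor} and the duality $(V/U)^*\simeq\ker G$. Everything else is a direct invocation of earlier results.
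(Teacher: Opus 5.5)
Your proposal is correct and follows essentially the same route as the paper. Both apply Lemma~\ref{lem:scalar-degree-monotonicity} to the nondegenerate $r$-dimensional code $\cP(U)$, with nondegeneracy from Corollary~\ref{cor:structural-obstruction}, and link full-weight words, completions and positivity of $\chi_{\delta_GM(G)}$ through Theorems~\ref{thm:full-MDS} and~\ref{thm:critical-MDS}; the only difference is ordering (you re-derive $\chi_{M(A_U)}=\chi_{\delta_GM(G)}$ and use the $\chi$-form of the threshold, whereas the paper uses the full-weight form and then cites Theorem~\ref{thm:critical-MDS}), which is immaterial.
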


\begin{proof}
	By Corollary~\ref{cor:structural-obstruction}, no coordinate of
	$\cP(U)$ vanishes identically. Thus $\cP(U)$ is nondegenerate,
	and it has dimension $r$. By Theorem~\ref{thm:full-MDS},
	$U_{q^m}$ has an MDS completion precisely when
	$\cP(U)_{q^m}$ contains a full-weight word.
	Lemma~\ref{lem:scalar-degree-monotonicity}, applied to $\cP(U)$,
	shows that this happens exactly when $m\ge s_0(\cP(U))$ and that
	$s_0(\cP(U))\le r$. By Theorem~\ref{thm:critical-MDS}, the same
	condition is equivalent to $\chi_{\delta_GM(G)}(q^m)>0$. Hence
	$m_0(U)=s_0(\cP(U))$.
\end{proof}

\begin{example}
	\label{ex:first-completion-degrees}
	We give two examples with different first completion degrees.
	Throughout, let $Q=2^m$.
	
	\emph{(i) Degree two.}
	Consider $U_1=\langle(1,1,1)\rangle\le\F_2^3$
	from Example~\ref{ex:plucker-construction}.
	As shown there,
	\[
	\cP(U_1)=\{(a,b,a+b):a,b\in\F_2\}.
	\]
	Over $\F_Q$, a word has full weight precisely when
	$a$ and $b$ are linearly independent over $\F_2$.
	Hence
	\[
	m_0(U_1)=2,\qquad N_{U_1}(Q)=Q-2.
	\]
	
	\emph{(ii) Degree three.}
	Let $U_2$ be the binary $[7,4,3]$ Hamming code
	with parity-check matrix
	\[
	H=
	\begin{pmatrix}
		1&0&0&1&1&0&1\\
		0&1&0&1&0&1&1\\
		0&0&1&0&1&1&1
	\end{pmatrix}.
	\]
	The seven nonzero words of $U_2^\perp$ all have
	weight $4$ and are precisely the circuit vectors
	of the column matroid of a generator matrix of
	$U_2$. By Theorem~\ref{thm:derived-identification},
	the distinct coordinate forms of $\cP(U_2)$
	are the seven nonzero binary linear forms in
	three variables.
	
	Thus $\cP(U_2)_Q$ contains a full-weight word
	if and only if there exist $a,b,c\in\F_Q$ such that
	\[
	a,\ b,\ c,\ a+b,\ a+c,\ b+c,\ a+b+c
	\]
	are all nonzero. This is equivalent to
	$a,b,c$ being linearly independent over $\F_2$,
	which is possible precisely when $m\ge3$.
	Consequently,
	\[
	m_0(U_2)=3.
	\]
	There are $(Q-1)(Q-2)(Q-4)$ such triples, so
	\[
	N_{U_2}(Q)=(Q-2)(Q-4).
	\]
	In particular, $U_2$ has no MDS completion over
	$\F_2$ or $\F_4$, but has $24$ completions over
	$\F_8$.
\end{example}

The general full-weight principle also gives a simple existence
criterion for the seed: all determinant coordinates must be nonzero
forms, and this condition is sufficient after a finite extension.

\begin{corollary}\label{cor:repair}
	Let $U\le\F_q^n$ have dimension $k-1$, and put $r=n-k+1$.
	The following are equivalent:
	\begin{enumerate}[label=\textup{(\roman*)},leftmargin=2.5em]
		\item $U$ admits an $[n,k]$ MDS completion over some finite
		scalar extension;
		\item $d(U)\ge r$;
		\item $\operatorname{sdef}(U)\le1$.
	\end{enumerate}
\end{corollary}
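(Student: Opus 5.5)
The plan is to prove the cycle of equivalences (i)$\Rightarrow$(ii)$\Rightarrow$(i) and then (ii)$\Leftrightarrow$(iii), using only results already established.

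For (i)$\Rightarrow$(ii), argue by contraposition. If $d(U)\le r-1$, then Corollary~\ref{cor:structural-obstruction} (equivalently, the first assertion of Theorem~\ref{thm:critical-MDS}) gives a $k$-subset $I$ with $\ell_I=0$. This coordinate of $\cP(U)$ then vanishes identically, and it still vanishes after any scalar extension. By Theorem~\ref{thm:full-MDS}, no $U_K$ has an MDS completion.

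For (ii)$\Rightarrow$(i), assume $d(U)\ge r$. Corollary~\ref{cor:structural-obstruction} shows that every $\ell_I$ is nonzero, so $\cP(U)$ is a nondegenerate $[\binom nk,r]_q$ code. Lemma~\ref{lem:scalar-degree-monotonicity} then provides a full-weight word in $\cP(U)_{q^m}$ for every $m\ge s_0(\cP(U))$, and in particular for $m=r$. Theorem~\ref{thm:full-MDS} converts this word into an MDS completion of $U_{q^r}$. Alternatively, one can cite Corollary~\ref{cor:first-completion-degree} directly, which yields a completion for every $m\ge m_0(U)$ with $m_0(U)\le r$.

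For (ii)$\Leftrightarrow$(iii), expand the definition of the Singleton defect for an $[n,k-1]$ code:
\[
\sdef(U)=n-(k-1)+1-d(U)=r+1-d(U).
\]
Hence $\sdef(U)\le1$ if and only if $d(U)\ge r$.

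There is no genuine obstacle, since everything is assembled from earlier results. The only point that needs care is that the nonvanishing argument in (i)$\Rightarrow$(ii) must hold over \emph{every} extension, not just over $\F_q$. This is exactly what the scalar-invariance of identically zero forms provides, together with Lemma~\ref{lem:zero-rank}, which guarantees that the distance condition does not change under extension.
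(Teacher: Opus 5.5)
Your proposal is correct and matches the paper's proof: necessity comes from Corollary~\ref{cor:structural-obstruction}, and sufficiency comes from Corollary~\ref{cor:first-completion-degree}, which you unpack through Lemma~\ref{lem:scalar-degree-monotonicity} and Theorem~\ref{thm:full-MDS}. The equivalence (ii)$\Leftrightarrow$(iii) follows from $\sdef(U)=r+1-d(U)$, exactly as you wrote; your closing appeal to Lemma~\ref{lem:zero-rank} is harmless but not needed, because $\ell_I=0$ is already a property of $G$ over $\F_q$.
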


\begin{proof}
	Necessity follows from Corollary~\ref{cor:structural-obstruction},
	and sufficiency from Corollary~\ref{cor:first-completion-degree}.
	Since $\dim U=k-1$, the Singleton defect is
	$\operatorname{sdef}(U)=r+1-d(U)$, so \textup{(ii)} and \textup{(iii)}
	are equivalent.
\end{proof}

To bound the first completion degree more explicitly, let
\begin{equation}\label{eq:sU}
	s(U):=|\cC(M(G))|
\end{equation}
be the number of circuits of the column matroid of $G$. When
$d(U)\ge r$, Theorem~\ref{thm:derived-identification} shows that
$s(U)$ is exactly the number of distinct obstruction hyperplanes.

\begin{proposition}\label{prop:hyperplane-bound}
	Assume $d(U)\ge r$. Then
	\begin{equation}\label{eq:critical-bound}
		m_0(U)\le\max\{1,\lceil\log_q s(U)\rceil\}.
	\end{equation}
	In particular,
	\begin{equation}\label{eq:uniform-bound}
		m_0(U)\le
		\min\Bigl\{r,\max\bigl\{1,
		\bigl\lceil\log_q\binom{n}{k}\bigr\rceil\bigr\}\Bigr\}.
	\end{equation}
\end{proposition}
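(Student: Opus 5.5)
The bound \eqref{eq:critical-bound} will follow from the elementary fact that a vector space over \(\F_Q\) is never the union of fewer than \(Q+1\) proper subspaces; we only need the weaker statement that it is not a union of at most \(Q\) hyperplanes. Assume \(d(U)\ge r\). By Theorem~\ref{thm:derived-identification}, the nonzero forms \(\ell_I\), up to proportionality, are exactly \(s(U)\) distinct linear forms on \(E_U\), one for each circuit of \(M(G)\). Over \(\F_Q\) their kernels are \(s(U)\) proper hyperplanes \(H_1,\dots,H_{s(U)}\) of \(K^n/U_K\cong\F_Q^r\). A quotient vector avoiding all of them gives, via \(\Phi_{U,\F_Q}\), a full-weight word of \(\cP(U)_Q\), hence an MDS completion by Theorem~\ref{thm:full-MDS}. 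So it suffices to show that the \(H_i\) do not cover \(\F_Q^r\) whenever \(Q\ge s(U)\).

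For the covering step, I count points. Each hyperplane contains \(Q^{r-1}\) vectors, all of which share the zero vector. Hence the union has at most
\[
1+s(U)\bigl(Q^{r-1}-1\bigr)
\]
elements. If \(s(U)\le Q\), this is at most \(1+Q^r-Q<Q^r\), since \(Q\ge2\). So some vector lies outside every \(H_i\). Equivalently, \(\chi_{\delta_GM(G)}(Q)>0\), and \(m\ge m_0(U)\) by Corollary~\ref{cor:first-completion-degree}.

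Now take \(m=\max\{1,\lceil\log_q s(U)\rceil\}\). Then \(q^m\ge s(U)\), so the previous step applies and gives \eqref{eq:critical-bound}. The case \(s(U)=0\) can only occur when \(r=0\), which is excluded since \(r\ge1\). Any \(s(U)\ge1\) is handled by the \(\max\) with \(1\).

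For \eqref{eq:uniform-bound}, I combine two bounds. First, \(m_0(U)\le r\) holds by Corollary~\ref{cor:first-completion-degree}. Second, \(s(U)\le\binom nk\): by Theorem~\ref{thm:derived-identification}(ii)--(iii), every circuit arises as the support of some nonzero \(c_I\) with \(|I|=k\), so \(I\mapsto\supp(c_I)\) is a surjection from a subset of \(\binom{[n]}{k}\) onto the set of circuits. Since \(\log_q\) and the ceiling are monotone, \eqref{eq:critical-bound} then yields the second term of the minimum.

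The only delicate point is the counting inequality at the boundary case \(s(U)=Q\); the ``\(-1\)'' correction for the shared origin is exactly what makes it strict, so I would verify that step carefully. Everything else is bookkeeping.
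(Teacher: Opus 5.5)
Your proof is correct and follows essentially the same route as the paper. You show that at most $Q$ hyperplanes cannot cover $\F_Q^r$, apply this to the $s(U)$ obstruction hyperplanes from Theorem~\ref{thm:derived-identification}, and then combine the result with $s(U)\le\binom nk$ and $m_0(U)\le r$; the only difference is the counting step, where your union bound $1+s(U)(Q^{r-1}-1)$ handles every $r\ge1$ at once without needing distinctness, while the paper treats $r=1$ separately and for $r\ge2$ uses $|H_i\setminus H_1|=Q^{r-1}-Q^{r-2}$ for distinct hyperplanes.
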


\begin{proof}
	Put $Q=q^m$ and $d=\dim E_U=r$. We first show that at most $Q$
	distinct proper hyperplanes cannot cover a $d$-dimensional vector
	space over $\F_Q$.
	
	If $d=1$, the only proper hyperplane is $\{0\}$ and the assertion
	is immediate. Suppose $d\ge2$, and let $H_1,\ldots,H_t$ be distinct
	hyperplanes with $t\le Q$. Each has $Q^{d-1}$ elements, and for
	$i\ge2$ the intersection $H_i\cap H_1$ has $Q^{d-2}$ elements.
	Consequently,
	\[
	\begin{aligned}
		\left|\bigcup_{i=1}^tH_i\right|
		&\le |H_1|+\sum_{i=2}^t|H_i\setminus H_1|\\
		&\le Q^{d-1}+(t-1)(Q^{d-1}-Q^{d-2})\\
		&\le Q^d-(Q-1)Q^{d-2}<Q^d.
	\end{aligned}
	\]
	Thus the complement of their union is nonempty.
	
	By Theorem~\ref{thm:derived-identification}, the obstruction
	arrangement of $U$ has exactly $s(U)$ distinct hyperplanes.
	Therefore an MDS completion exists whenever $q^m\ge s(U)$,
	giving \eqref{eq:critical-bound}.
	
	Every circuit of $M(G)$ occurs as the unique circuit within some
	$k$-subset, so Theorem~\ref{thm:derived-identification} also gives
	$s(U)\le\binom{n}{k}$. Combining this with
	\eqref{eq:critical-bound} and $m_0(U)\le r$ from
	Corollary~\ref{cor:first-completion-degree} proves
	\eqref{eq:uniform-bound}.
\end{proof}

\begin{remark}\label{rem:hyperplane-sharpness}
	If $U$ is an $[n,k-1]_q$ MDS code, then
	$M(G)=U_{k-1,n}$, whose circuits are precisely the $k$-subsets
	of $[n]$. Hence
	\[
	s(U)=\binom{n}{k}.
	\]
\end{remark}

\begin{corollary}\label{cor:base-field-circuit}
	Assume $d(U)\ge r$. If
	\[
	s(U)=|\cC(M(G))|\le q,
	\]
	then $U$ has an $[n,k]_q$ one-step MDS completion. In particular,
	every $[n,k-1]_q$ MDS seed satisfying
	\[
	\binom{n}{k}\le q
	\]
	admits a one-step MDS completion over the base field.
\end{corollary}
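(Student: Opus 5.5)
The plan is to derive the corollary directly from Proposition~\ref{prop:hyperplane-bound} together with Corollary~\ref{cor:first-completion-degree}, taking $m=1$ in both. For the first assertion, assume $d(U)\ge r$ and $s(U)\le q$.

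\emph{Case $s(U)\ge 2$.} Then $\lceil\log_q s(U)\rceil\le 1$, because $s(U)\le q=q^1$.

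\emph{Case $s(U)=1$.} Then the maximum in \eqref{eq:critical-bound} equals $1$ directly.

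In either case \eqref{eq:critical-bound} gives $m_0(U)\le 1$, so $m_0(U)=1$. Corollary~\ref{cor:first-completion-degree} with $m=1$ then yields a one-step MDS completion over $\F_q$ itself.

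A more self-contained route avoids the logarithm and goes through the hyperplane-avoidance argument in the proof of Proposition~\ref{prop:hyperplane-bound}. By Theorem~\ref{thm:derived-identification}, the nonzero forms $\ell_I$ define exactly $s(U)\le q$ distinct proper hyperplanes in $E_U\cong\F_q^r$. The counting estimate in that proof shows that at most $q$ distinct hyperplanes cannot cover the space, so some $\bar v$ lies in none of them. Hence $\Phi_U(\bar v)$ has full weight, and Theorem~\ref{thm:full-MDS} with $K=\F_q$ gives the completion. One small point needs care: when $d=r=1$ there is only one proper hyperplane, $\{0\}$, and any nonzero vector avoids it. This case is already handled in that proof.

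For the "in particular" clause, let $U$ be an $[n,k-1]_q$ MDS code. Then $d(U)=n-(k-1)+1=r+1\ge r$, so the hypothesis of the first part holds. By Remark~\ref{rem:hyperplane-sharpness}, $s(U)=\binom nk$, so the assumption $\binom nk\le q$ is exactly $s(U)\le q$, and the first part applies.

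I expect no real obstacle here. The only step to check is the degenerate case $s(U)=1$, where $\log_q s(U)=0$; the $\max\{1,\cdot\}$ in \eqref{eq:critical-bound} is there to absorb it.
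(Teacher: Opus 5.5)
Your proposal is correct and follows the paper's proof. You apply Proposition~\ref{prop:hyperplane-bound} with $m=1$ to get $m_0(U)=1$, and you use Remark~\ref{rem:hyperplane-sharpness} for the MDS-seed clause. The paper then invokes Theorem~\ref{thm:critical-MDS} where you invoke Corollary~\ref{cor:first-completion-degree}, which amounts to the same thing. Your direct hyperplane-avoidance route via Theorem~\ref{thm:full-MDS} is just the proof of Proposition~\ref{prop:hyperplane-bound} unpacked, and it is sound because $d(U)\ge r$ guarantees that no $\ell_I$ is identically zero.
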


\begin{proof}
	Apply Proposition~\ref{prop:hyperplane-bound} with $m=1$ and
	then Theorem~\ref{thm:critical-MDS}. For an MDS seed,
	Remark~\ref{rem:hyperplane-sharpness} gives
	$s(U)=\binom{n}{k}$.
\end{proof}

Finally, although the counting formula is expressed using a
generator matrix, the resulting counts depend only on the monomial
equivalence class of the seed.

\begin{lemma}\label{lem:monomial-invariance}
	Let $M$ be an invertible monomial $n\times n$ matrix over $\F_q$.
	For every finite extension $K/\F_q$, the map
	\[
	W\longmapsto WM
	\]
	gives a bijection between the $[n,k]_K$ MDS completions of $U_K$
	and those of $(UM)_K$. Consequently, the completion counts over all
	finite scalar extensions, and the first completion degree whenever
	defined, are invariant under monomial equivalence of $U$.
\end{lemma}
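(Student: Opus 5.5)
The argument is direct: a monomial matrix preserves subspace containment and dimension, and it rescales every maximal minor by a nonzero constant. Write $M=PD$, where $P$ is a permutation matrix and $D=\mathrm{diag}(d_1,\ldots,d_n)$ with all $d_i\in\F_q^\times$. Since $M$ is invertible and has entries in $\F_q\subseteq K$, right multiplication $x\mapsto xM$ is a $K$-linear automorphism of $K^n$. It therefore sends $k$-dimensional subspaces to $k$-dimensional subspaces, and its inverse is right multiplication by $M^{-1}$, which is again monomial.

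First we check compatibility with scalar extension, namely $(U_K)M=(UM)_K$. Both sides are the $K$-span of $\{uM:u\in U\}$, because $M$ has entries in $\F_q$. Hence $U_K\subseteq W$ holds if and only if $(UM)_K\subseteq WM$, and the map $W\mapsto WM$ with inverse $W'\mapsto W'M^{-1}$ is a bijection between $k$-dimensional subspaces containing $U_K$ and $k$-dimensional subspaces containing $(UM)_K$.

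Next we check that the MDS property is preserved. This can be argued in two ways.
\begin{itemize}
\item \emph{Via weights.} For $x\in K^n$, the vector $xM$ is a coordinate permutation of $xD$, so $\wt(xM)=\wt(x)$. Hence $d(WM)=d(W)$ and $\dim WM=\dim W$, and the Singleton defect is unchanged.
\item \emph{Via Proposition~\ref{prop:MDS-plucker}.} If $A$ generates $W$, then $AM$ generates $WM$. Each maximal minor of $AM$ equals $\pm\bigl(\prod_{i\in I}d_i\bigr)\det A_{I}$ for a suitable $k$-subset $I$, and these subsets run over all $k$-subsets as the column choice varies. So all minors are nonzero for $AM$ exactly when they are for $A$.
\end{itemize}
Either way, the bijection restricts to the MDS completions, which gives $N_U(Q)=N_{UM}(Q)$ for every $Q=q^m$.

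Finally, the first completion degree $m_0$ is, by Corollary~\ref{cor:first-completion-degree}, the least $m$ with $N_U(q^m)>0$. Since the counts agree for every $m$, $m_0(U)=m_0(UM)$. The existence condition also transfers, because $d(UM)=d(U)$ by the same weight argument.

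None of these steps is a real obstacle. The one point needing care is the identity $(U_K)M=(UM)_K$, which relies on $M$ being defined over $\F_q$. We also need that the inverse of a monomial matrix is monomial, so that the correspondence is a genuine bijection rather than merely an injection.
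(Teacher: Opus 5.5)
Your proof is correct and takes the same route as the paper. Right multiplication by $M$ bijects the $k$-dimensional subspaces containing $U_K$ with those containing $(UM)_K$, and preservation of Hamming weight carries the MDS property across, so the counts and first completion degrees agree. Your alternative argument via minors also works, with one harmless correction: for $M=PD$ the nonzero scalar factor is $\prod_{j\in J}d_j$ over the chosen column set $J$ of $AM$, not a product indexed by the corresponding set $I$ of columns of $A$.
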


\begin{proof}
	Right multiplication by $M$ bijects the $k$-dimensional subspaces
	containing $U_K$ with those containing $(UM)_K$. Since a monomial
	transformation preserves Hamming weights, it also preserves the MDS
	property. Thus the completion counts agree over each extension,
	and so do their first nonzero degrees whenever defined.
\end{proof}

\subsection{Coset distances, syndromes, and deep holes}
\label{subsec:main-covering}\label{sec:covering}

Covering radii and deep holes have received sustained attention
in coding theory, particularly for Reed--Solomon and projective
Reed--Solomon codes. Kaipa studied the connection between
deep-hole classification and MDS extensions of Reed--Solomon
codes \cite{KaipaDeepHoles}, while Zhang, Wan, and Kaipa
investigated the construction and classification of deep holes
of projective Reed--Solomon codes \cite{ZhangWanKaipa}.
From a counting perspective, Jurrius and Pellikaan developed
a framework for extended coset leader weight enumerators
using derived codes, hyperplane arrangements, and characteristic
polynomials \cite{JurriusPellikaanCoset}.

When $d(U)\ge r$, the full-weight condition in $\cP(U)_K$
has an equivalent interpretation in terms of coset distance. This
connects the same completion problem with covering radii and deep
holes. We begin by identifying the minimum distance of a one-step
span.

\begin{proposition}\label{prop:completion-distance}
If $v\notin U_K$, then
\begin{equation}\label{eq:completion-distance}
 d(U_K+\langle v\rangle)
   =\min\{d(U),\delta_{U,K}(v+U_K)\}.
\end{equation}
Consequently, if $d(U)\ge r$, then
\begin{equation}\label{eq:metric-completion}
 U_K+\langle v\rangle\text{ is an }[n,k]_K\text{ MDS code}
 \quad\Longleftrightarrow\quad \delta_{U,K}(v+U_K)=r.
\end{equation}
\end{proposition}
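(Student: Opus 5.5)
The plan is to prove \eqref{eq:completion-distance} by splitting the nonzero words of $W=U_K+\langle v\rangle$ according to whether they lie in $U_K$, and then to deduce \eqref{eq:metric-completion} from the Singleton bound and the redundancy bound \eqref{eq:redundancy}.

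For the first part, every nonzero word of $W$ has the form $u+\lambda v$ with $u\in U_K$ and $\lambda\in K$. If $\lambda=0$, the word is a nonzero word of $U_K$. The minimum weight of these words is $d(U_K)=d(U)$, because Hamming minimum distance is unchanged by scalar extension (Lemma~\ref{lem:zero-rank}). If $\lambda\neq0$, then $u+\lambda v=\lambda(v+\lambda^{-1}u)$ has the same weight as $v-u'$ with $u'=-\lambda^{-1}u\in U_K$. As $u$ ranges over $U_K$, so does $u'$. Hence the minimum weight over this class is exactly $\delta_{U,K}(v+U_K)$, as defined in \eqref{eq:coset-distance}. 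This class contains no zero word, since $v\notin U_K$. Taking the smaller of the two minima gives \eqref{eq:completion-distance}.

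For the consequence, note that $\dim_K W=k$ because $v\notin U_K$. So $W$ is MDS exactly when $d(W)=n-k+1=r$. Assume $d(U)\ge r$.
\begin{itemize}
\item If $\delta_{U,K}(v+U_K)=r$, then $d(W)=\min\{d(U),r\}=r$, so $W$ is MDS.
\item Conversely, suppose $W$ is MDS, so $d(W)=r$. By \eqref{eq:completion-distance}, $\delta_{U,K}(v+U_K)\ge r$. The redundancy bound \eqref{eq:redundancy} gives $\delta_{U,K}(v+U_K)\le r$, hence equality.
\end{itemize}

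There is no real obstacle. The only point needing care is the weight-preserving change of variables $u\mapsto -\lambda^{-1}u$ in the case $\lambda\neq0$, which identifies that class of codewords with the coset $v+U_K$ up to scaling.
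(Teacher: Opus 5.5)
Your proposal is correct and follows essentially the same route as the paper. You split the nonzero words of $U_K+\langle v\rangle$ into those in $U_K$ and those of the form $u+av$ with $a\neq0$, rescale the latter to identify their minimum weight with $\delta_{U,K}(v+U_K)$, and combine this with $d(U_K)=d(U)$ and the redundancy bound \eqref{eq:redundancy}; your explicit two-direction check of \eqref{eq:metric-completion} simply spells out what the paper compresses into one line.
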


\begin{proof}
	Every nonzero word in \(U_K+\langle v\rangle\) either belongs to
	\(U_K\) or has the form
	\[
	u+av,\qquad u\in U_K,\quad a\in K^\times.
	\]
	Since multiplication by \(a^{-1}\) preserves Hamming weight,
	\[
	\wt(u+av)=\wt(v+a^{-1}u).
	\]
	As \(a^{-1}u\) ranges over \(U_K\), the minimum weight among such
	words is exactly
	\[
	\delta_{U,K}(v+U_K).
	\]
	Together with \(d(U_K)=d(U)\), this proves \eqref{eq:completion-distance}. If \(d(U)\ge r\),
	then the redundancy bound \(\delta_{U,K}(v+U_K)\le r\) gives \eqref{eq:metric-completion}.
\end{proof}

Thus, when $d(U)\ge r$, the MDS completion problem can equally be
phrased as the problem of finding cosets at the largest distance
allowed by the redundancy bound \eqref{eq:redundancy}.  For MDS seeds, the connection between row augmentation and deep holes
is given in \cite[Lemma~6]{WuDingChen}.

\begin{proposition}\label{prop:syndrome-realization}
	Assume \(d(U)\ge r\), and let
	\[
	H=(h_1\ \cdots\ h_n)
	\]
	be an \(r\times n\) parity-check matrix of \(U\). For every field extension \(K/\F_q\), the syndrome map
	\[
	\sigma_H:K^n/U_K\longrightarrow K^r,
	\qquad
	\bar v\longmapsto Hv^{\mathsf T},
	\]
	is a \(K\)-linear isomorphism.
	
	If \(I\subseteq[n]\) has size \(k\) and \(J=[n]\setminus I\), then
	\begin{equation}\label{eq:syndrome-hyperplane}
		\sigma_H\bigl((\ker\ell_I)_K\bigr)
		=
		\Span_K\{h_j:j\in J\}.
	\end{equation}
	Consequently, in syndrome coordinates, the obstruction arrangement defined by the nonzero forms \(\ell_I\) is transformed into the arrangement formed by the spans of the \((r-1)\)-subsets of the parity-check columns.
\end{proposition}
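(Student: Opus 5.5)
The plan has three steps: show that the syndrome map is an isomorphism, identify each hyperplane $(\ker\ell_I)_K$ in syndrome coordinates, and read off the arrangement statement.

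\textbf{The syndrome map.} Since $H$ has full row rank $r$ and $\ker H=U$, the same holds after scalar extension: $H$ still has rank $r$ over $K$, and $\{v\in K^n:Hv^{\mathsf T}=0\}=U_K$. The induced map $\sigma_H$ is therefore injective. Both spaces have dimension $r$, so $\sigma_H$ is a $K$-linear isomorphism.

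\textbf{Identifying $\sigma_H((\ker\ell_I)_K)$.} Fix $I$ with $|I|=k$ and put $J=[n]\setminus I$, so $|J|=r-1$.

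\begin{enumerate}[label=\textup{(\arabic*)},leftmargin=2.5em]
\item By Corollary~\ref{cor:structural-obstruction}, $d(U)\ge r$ gives $\ell_I\ne0$. Hence $(\ker\ell_I)_K$ is a hyperplane in $K^n/U_K$, and its image under $\sigma_H$ is a hyperplane of $K^r$.
\item For the inclusion, take $j\in J$. The vector $v=e_j$ is supported outside $I$, so the last row of $G(e_j)_I$ is zero and $\det G(e_j)_I=0$, that is, $\ell_I(e_j+U_K)=0$. Its syndrome is $He_j^{\mathsf T}=h_j$. Thus $\Span_K\{h_j:j\in J\}\subseteq\sigma_H((\ker\ell_I)_K)$.
\item For equality, it suffices to show that the $r-1$ columns $h_j$, $j\in J$, are linearly independent. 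A dependence $\sum_{j\in J}x_jh_j=0$ gives a nonzero $x\in\ker H=U_K$ with $\supp(x)\subseteq J$, so $\wt(x)\le r-1$. This contradicts $d(U_K)=d(U)\ge r$ (Lemma~\ref{lem:zero-rank}).
\item The span therefore has dimension $r-1$. Since it sits inside a hyperplane of $K^r$, the two coincide, which proves \eqref{eq:syndrome-hyperplane}.
\end{enumerate}

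\textbf{The arrangement.} As $I$ runs over the $k$-subsets of $[n]$, the complement $J$ runs over the $(r-1)$-subsets. Applying \eqref{eq:syndrome-hyperplane} to each $I$ shows that $\sigma_H$ carries the obstruction arrangement onto the family of spans of $(r-1)$-subsets of parity-check columns. Proportional forms $\ell_I$ give the same hyperplane on both sides, so the collections of distinct hyperplanes correspond.

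The only point needing care is step (3), the independence of the $h_j$. This is exactly where the hypothesis $d(U)\ge r$ enters; without it the span could have dimension below $r-1$.
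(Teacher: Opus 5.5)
Your proof is correct, but it is organized differently from the paper's.

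The paper proves \eqref{eq:syndrome-hyperplane} by direct double inclusion. Using $\rank G_I=k-1$ (from $d(U)\ge r$), it shows that $\det G(v)_I=0$ exactly when $v_I$ lies in the row space of $G_I$. That holds exactly when the coset $v+U_K$ contains a representative $e$ supported on $J$. The syndromes of such vectors $e$ are precisely the elements of $\Span_K\{h_j:j\in J\}$, so both inclusions come out at once and no dimension count is needed.

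You instead prove only the easy inclusion, via the unit vectors $e_j$ with $j\in J$; this step uses no hypothesis at all. You then close the gap with a dimension count, invoking $d(U)\ge r$ twice:
\begin{enumerate}
\item to get $\ell_I\ne0$, so that $\sigma_H\bigl((\ker\ell_I)_K\bigr)$ is a hyperplane of $K^r$;
\item to get linear independence of the $r-1$ parity-check columns $h_j$ with $j\in J$.
\end{enumerate}

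Your route shows exactly where the hypothesis enters. It also makes visible that each span $\Span_K\{h_j:j\in J\}$ is itself a hyperplane, since any $r-1$ columns of $H$ are independent. The paper's route gives a more informative set-theoretic description of $\ker\ell_I$: it is the set of cosets having a representative that vanishes on $I$, i.e.\ a representative of weight at most $r-1$. That description is what links the obstruction hyperplanes to coset distances below $r$ in Proposition~\ref{prop:completion-distance} and Theorem~\ref{thm:covering}.
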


\begin{proof}
	First, the map \(v\longmapsto Hv^{\mathsf T}\) has kernel \(U_K\) and rank \(r\). By the first isomorphism theorem, it induces the stated \(K\)-linear isomorphism \(K^n/U_K\simeq K^r\).
	
	Next, fix \(I\subseteq[n]\) with \(|I|=k\), and let \(J=[n]\setminus I\). Since \(d(U)\ge r\), Corollary~\ref{cor:structural-obstruction} guarantees that \(\rank G_I=k-1\). Recall that the linear form \(\ell_I\) evaluated at \(\bar v\) is precisely the determinant \(\det G(v)_I\). Thus, \(\ell_I(\bar v)=0\) if and only if the rows of \(G(v)_I\) are linearly dependent, which means the coordinates \(v_I\) lie in the row space of \(G_I\). Equivalently, there exists a codeword \(u\in U_K\) such that the restriction \(u_I\) equals \(v_I\).
	
	Because \(u_I = v_I\), the difference \(e = v - u\) is a vector supported entirely on \(J\). Furthermore, since \(u\in U_K\), the parity-check equations give \(Hu^{\mathsf T}=0\). Therefore, \(v\) and \(e\) share the same syndrome:
	\[
	Hv^{\mathsf T} = H(e+u)^{\mathsf T} = He^{\mathsf T} + Hu^{\mathsf T} = He^{\mathsf T}.
	\]
	
	Finally, since \(e\) is supported on \(J\), its syndrome \(He^{\mathsf T} = \sum_{j\in J} e_j h_j\) lies in the span \(\Span_K\{h_j:j\in J\}\). Conversely, every vector in this span is the syndrome of some vector supported on \(J\). Since vectors with the same syndrome differ by an element of \(U_K\), this establishes the bijection
	\[
	\sigma_H\bigl((\ker\ell_I)_K\bigr)
	=
	\Span_K\{h_j:j\in J\},
	\]
	which proves \eqref{eq:syndrome-hyperplane}.
\end{proof}

The same correspondence yields a count of top-distance cosets and,
when the covering radius reaches $r$, of deep-hole words.

\begin{theorem}
	\label{thm:covering}
	Let $U\le\F_q^n$ have dimension $k-1$ and assume
	\[
	d(U)\ge r=n-k+1.
	\]
	Let $G$ generate $U$. For every $m\ge1$, put $Q=q^m$. Then
	\begin{equation}\label{eq:top-distance-count}
		\#\{\bar v\in\F_Q^n/U_Q:
		\delta_{U,\F_Q}(\bar v)=r\}
		=
		\chi_{\delta_GM(G)}(Q).
	\end{equation}
	Equivalently,
	\begin{equation}\label{eq:covering-equivalence}
		\rho(U_Q)=r
		\iff
		\chi_{\delta_GM(G)}(Q)>0
		\iff
		U_Q\text{ admits an }[n,k]_Q
		\text{ one-step MDS completion}.
	\end{equation}
	Whenever these equivalent conditions hold, the cosets counted in
	\eqref{eq:top-distance-count} are precisely the deep-hole cosets of
	$U_Q$. Consequently, the number of deep-hole words is
	\begin{equation}\label{eq:deep-hole-words}
		Q^{k-1}\chi_{\delta_GM(G)}(Q).
	\end{equation}
	In particular,
	\begin{equation}\label{eq:critical-covering}
		m_0(U) 
		=
		\min\{m\ge1:\rho(U_{q^m})=r\}.
	\end{equation}
\end{theorem}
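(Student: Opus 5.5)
The plan is to translate the coset count into the full-weight count already carried out in the proof of Theorem~\ref{thm:critical-MDS}, using Proposition~\ref{prop:completion-distance} as the bridge. Fix \(Q=q^m\) and a coset \(\bar v\in\F_Q^n/U_Q\).

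\textbf{Step 1: the zero coset.} If \(\bar v=0\), then \(\delta_{U,\F_Q}(\bar v)=0<r\), since \(r\ge1\). So it contributes nothing to the left side of \eqref{eq:top-distance-count}.

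\textbf{Step 2: nonzero cosets.} For \(\bar v\ne0\), the code \(U_Q+\langle v\rangle\) has dimension \(k\). By \eqref{eq:metric-completion}, \(\delta_{U,\F_Q}(\bar v)=r\) holds exactly when this code is MDS. By Theorem~\ref{thm:full-MDS} and the injectivity of \(\Phi_{U,\F_Q}\), that happens exactly when \(\Phi_{U,\F_Q}(\bar v)\) is a full-weight word of \(\cP(U)_Q\). Since \(\Phi_{U,\F_Q}\) is a bijection \(\F_Q^n/U_Q\to\cP(U)_Q\), the number of cosets at distance \(r\) equals the number of full-weight words in \(\cP(U)_Q\).

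\textbf{Step 3: evaluate the count.} As shown in the proof of Theorem~\ref{thm:critical-MDS}, Proposition~\ref{prop:full-support-count} together with Theorem~\ref{thm:derived-identification} gives this number as \(\chi_{M(A_U)}(Q)=\chi_{\delta_GM(G)}(Q)\), where removing parallel columns is harmless because there are no loops. This proves \eqref{eq:top-distance-count}.

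\textbf{Step 4: the equivalences.} By the redundancy bound \eqref{eq:redundancy}, \(\rho(U_Q)\le r\). Hence \(\rho(U_Q)=r\) if and only if some coset has distance exactly \(r\). By Step 3 this is equivalent to \(\chi_{\delta_GM(G)}(Q)>0\), and by Theorem~\ref{thm:critical-MDS} that is in turn equivalent to the existence of a completion. This gives \eqref{eq:covering-equivalence}.

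\textbf{Step 5: deep holes and \(m_0\).} When \(\rho(U_Q)=r\), the deep-hole cosets are by definition the cosets at distance \(\rho(U_Q)=r\), which are exactly the cosets counted above. Each coset has \(|U_Q|=Q^{k-1}\) elements, which gives \eqref{eq:deep-hole-words}. Finally, \eqref{eq:critical-covering} follows by substituting the equivalence \(\rho(U_{q^m})=r\iff\chi_{\delta_GM(G)}(q^m)>0\) into the definition of \(m_0(U)\) in Corollary~\ref{cor:first-completion-degree}.

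There is no serious obstacle. The only point needing care is Step 2: the count is of cosets rather than of completions, so there is no division by \(Q-1\), and the zero coset must be excluded separately.
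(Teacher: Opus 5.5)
Your proof is correct and follows essentially the same route as the paper: Proposition~\ref{prop:completion-distance} turns distance-$r$ cosets into MDS completions, the full-weight count gives $\chi_{\delta_GM(G)}(Q)$, and the redundancy bound yields the equivalences, the deep-hole count, and the formula for $m_0(U)$. The only difference is cosmetic: the paper multiplies $N_U(Q)$ by $Q-1$, since each scalar class of nonzero cosets has $Q-1$ members, whereas you count cosets directly through the bijection $\Phi_{U,\F_Q}$ onto $\cP(U)_Q$, which avoids that step and makes the exclusion of the zero coset explicit.
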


\begin{proof}
	Put \(K=\F_Q\). By Proposition~\ref{prop:completion-distance}, since
	\(d(U)\ge r\), for every \(v\notin U_K\),
	\[
	U_K+\langle v\rangle \text{ is an }[n,k]_K\text{ MDS code}
	\quad\Longleftrightarrow\quad
	\delta_{U,K}(v+U_K)=r.
	\]
	Moreover, nonzero scalar multiples of a coset have the same distance
	from \(U_K\), and each such scalar class contains exactly \(Q-1\)
	cosets. Therefore Theorem~\ref{thm:critical-MDS} gives
	\[
	\#\left\{
	\bar v\in \F_Q^n/U_Q:
	\delta_{U,\F_Q}(\bar v)=r
	\right\}
	=
	\chi_{\delta_GM(G)}(Q).
	\]

	Since the redundancy bound gives
	\[
	\rho(U_Q)\le r,
	\]
	a coset of distance \(r\) exists if and only if
	\(\rho(U_Q)=r\). This proves
	\[
	\rho(U_Q)=r
	\quad\Longleftrightarrow\quad
	\chi_{\delta_GM(G)}(Q)>0
	\quad\Longleftrightarrow\quad
	U_Q\text{ admits a one-step MDS completion}.
	\]

	When \(\rho(U_Q)=r\), the cosets counted above are precisely the
	deep-hole cosets. Each contains
	\[
	|U_Q|=Q^{k-1}
	\]
	words, so the number of deep-hole words is
	\[
	Q^{k-1}\chi_{\delta_GM(G)}(Q).
	\]
	Finally, taking the least \(m\) for which these equivalent conditions
	hold gives the stated formula for the first extension degree.
\end{proof}

\begin{example}\label{ex:covering-three-coordinates}
	Return to the code
	$U=\langle(1,1,1)\rangle\le\F_2^3$
	from Example~\ref{ex:three-coordinates}.
	Let $K=\F_{2^m}$ and $Q=2^m$.
	A parity-check matrix of $U$ is
	\[
	H=
	\begin{pmatrix}
		1&1&0\\
		1&0&1
	\end{pmatrix}.
	\]
	Its three columns represent distinct points of
	$\operatorname{PG}(1,Q)$.
	
	Under the syndrome isomorphism
	$K^3/U_Q\simeq K^2$, a coset has distance
	$2$ from $U_Q$ precisely when its syndrome
	lies outside the three one-dimensional subspaces
	spanned by the columns of $H$.
	Since these subspaces intersect only at zero,
	the number of such syndromes is
	\[
	Q^2-\bigl(1+3(Q-1)\bigr)
	=(Q-1)(Q-2).
	\]
	Consequently,
	\[
	\rho(U_Q)=
	\begin{cases}
		1, & Q=2,\\
		2, & Q\ge 4.
	\end{cases}
	\]
	For $Q\ge4$, the number of deep-hole cosets
	is $(Q-1)(Q-2)$, and the number of deep-hole
	words is
	\[
	Q(Q-1)(Q-2).
	\]
	Dividing the number of distance-$2$ cosets
	by $Q-1$ recovers
	$N_U(Q)=Q-2$, as obtained in
	Example~\ref{ex:three-coordinates}.
\end{example}

\section{Examples}\label{sec:benchmark}

The two examples illustrate both the information carried by the
full-weight count and its limitations. The first gives seeds with
the same column matroid and Hamming weight enumerator but different
completion counts and first completion degrees. The second gives
monomially inequivalent seeds with identical counts over every finite
scalar extension. In both cases, the completion count is the number
of full-weight words in the associated Pl\"ucker extension code
divided by $Q-1$.

Throughout this section, \(Q=q^m\), and the coordinate positions are
kept fixed. When a seed is specified by a parity-check matrix, \(G\)
denotes an arbitrary generator matrix of that seed.

\subsection{Different counts for the same column matroid}\label{sec:lowlength}

We first show that the completion count is not determined by the
ordinary column matroid, even together with the Hamming weight
enumerator.

\begin{proposition}\label{prop:six-point-pair}
Over \(\F_7\), let
\[
H_c=
\begin{pmatrix}
1&0&0&1&1&1\\
0&1&0&1&2&3\\
0&0&1&1&3&c
\end{pmatrix},
\qquad U_c=\ker H_c,
\qquad c\in\{2,4\}.
\]
Then \(U_2\) and \(U_4\) are both \([6,3,4]_7\) MDS codes with
column matroid \(U_{3,6}\) and Hamming weight enumerator
\[
1+90z^4+108z^5+144z^6.
\]
For every \(Q=7^m\), their numbers of \([6,4,3]_Q\) MDS completions
are
\[
N_{U_2}(Q)=(Q-7)^2+2,
\qquad
N_{U_4}(Q)=(Q-7)^2.
\]
Thus their completion counts differ over every finite scalar
extension, although their column matroids and weight enumerators
coincide. Moreover,
\[
m_0(U_2)=1,\qquad m_0(U_4)=2.
\]
\end{proposition}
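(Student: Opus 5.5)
The plan is to use the syndrome picture of Proposition~\ref{prop:syndrome-realization} and reduce the count to an incidence count in a projective plane. Here $n=6$ and $k=4$, so $r=3$ and $\dim U_c=3$.

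\textbf{Step 1: MDS property, column matroid and weight enumerator.} First check that every $3\times 3$ minor of $H_c$ is nonzero in $\F_7$; there are $20$ finite determinant checks for each $c\in\{2,4\}$. Then $U_c^\perp$, the row space of $H_c$, is a $[6,3,4]_7$ MDS code, hence so is its dual $U_c$. A $[6,3]$ MDS code has column matroid $U_{3,6}$. Its weight enumerator is the standard MDS weight distribution for $n=6$, $k=3$, $q=7$, which gives $A_4=\binom64(7-1)=90$, and then $A_5=108$ and $A_6=144$ follow from the usual formula or from the total $7^3=343$. In particular $d(U_c)=4\ge r$, so Theorem~\ref{thm:critical-MDS} applies.

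\textbf{Step 2: reduction to points off a line arrangement.} The number $N_{U_c}(Q)=\chi(Q)/(Q-1)$ counts projective points of $\PP(\F_Q^3)$ lying off the obstruction arrangement. By \eqref{eq:syndrome-hyperplane}, that arrangement consists of the lines $\Span\{h_i,h_j\}$ for the $(r-1)$-subsets $\{i,j\}$. So $N_{U_c}(Q)$ is the number of points of $\mathrm{PG}(2,Q)$ not on any of the $15$ lines joining pairs of the six points $P_1,\dots,P_6$ given by the columns of $H_c$. By Step~1 these six points form an arc.

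\textbf{Step 3: counting the union of the $15$ lines.} I will use $|\bigcup L|=\sum|L|-\sum_P(\mu(P)-1)$, where $\mu(P)$ is the number of lines through $P$.
\begin{itemize}
\item Each arc point lies on $5$ lines and contributes $4$.
\item Every other point on two of the lines is a ``diagonal'' intersection $P_aP_b\cap P_cP_d$ with $\{a,b\}\cap\{c,d\}=\emptyset$. There are $45$ such pairs of lines.
\item A non-arc point lies on at most $3$ lines, since the lines through it must use disjoint pairs of points. A triple point comes from a partition $\{ab|cd|ef\}$ with the three lines concurrent. It absorbs $3$ pairs but contributes $2$.
\end{itemize}
With $t$ the number of concurrent partitions, the union has $15(Q+1)-24-(45-t)$ points. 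Hence
\[
N_{U_c}(Q)=Q^2+Q+1-15Q+54-t=(Q-7)^2+6-t .
\]
Concurrency of three lines is a determinant condition with entries in $\F_7$, so $t$ is independent of $m$.

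\textbf{Step 4: computing $t$ (the main work).} For each of the $15$ perfect matchings of $\{1,\dots,6\}$, compute the three line vectors $h_a\times h_b$ over $\F_7$ and test whether their $3\times3$ determinant vanishes. The claim to verify is $t=4$ for $c=2$ and $t=6$ for $c=4$. This mechanical check is where an arithmetic slip is most likely, so I would organise it by the partner of index $1$, giving five groups of three matchings, and double-check each determinant.

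\textbf{Step 5: conclusions.} Evaluating the formula gives the following values.
\begin{itemize}
\item At $Q=7$: $N_{U_2}(7)=2>0$ and $N_{U_4}(7)=0$, so $m_0(U_2)=1$.
\item At $Q=49$: both counts are positive, so $m_0(U_4)=2$ by Corollary~\ref{cor:first-completion-degree}.
\item For every $Q=7^m$ the two counts differ by exactly $2$.
\end{itemize}
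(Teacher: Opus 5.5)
Your proof is correct, and it reaches the formulas by a more explicit route than the paper. The paper applies Theorem~\ref{thm:critical-MDS} and simply reports the result of computing $\chi_{\delta_GM(G)}$ for the two represented derived matroids, which are rank-$3$ matroids on the $15$ circuit vectors of $U_{3,6}$. It does not show how that computation is done.

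You bypass the derived matroid entirely. Proposition~\ref{prop:syndrome-realization} turns $N_{U_c}(Q)$ into the number of points of $\mathrm{PG}(2,Q)$ lying off the $15$ secants of the $6$-arc formed by the columns of $H_c$. Your multiplicity count then gives $N_U(Q)=(Q-7)^2+6-t$ for every $[6,3]$ MDS seed. This is the rank-$2$-flat computation of $\chi_{\delta_GM(G)}(Q)=(Q-1)\bigl((Q-7)^2+6-t\bigr)$, made geometric: the rank-$2$ flats are the six arc points, each of size $5$, and the diagonal points, each of size $2$ or $3$.

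Your route gives insight that the paper's black-box computation does not:
\begin{itemize}
\item All the difference between $U_2$ and $U_4$ sits in one projective invariant of the arc, namely the number of concurrent secant triples. Neither the column matroid $U_{3,6}$ nor the MDS weight enumerator can detect it.
\item It explains why the two counts differ by a constant independent of $Q$.
\item It reduces the verification to $15$ hand-checkable $3\times3$ determinants per code.
\end{itemize}
The paper's framework, by contrast, applies uniformly for any $r$. Your inclusion--exclusion relies on the arrangement living in a projective plane.

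Your deferred Step~4 also checks out. For $c=2$ the concurrent partitions are $14|25|36$, $14|26|35$, $15|24|36$ and $16|25|34$, so $t=4$. For $c=4$ they are $13|24|56$, $13|25|46$, $14|23|56$, $14|25|36$, $15|23|46$ and $15|24|36$, so $t=6$. These are exactly the values the statement requires.
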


\begin{proof}
	Direct computation shows that every $3\times3$ minor
	of $H_2$ and $H_4$ is nonzero. Hence both $U_2$
	and $U_4$ are $[6,3,4]_7$ MDS codes with column
	matroid $U_{3,6}$ and weight enumerator
	\[
	1+90z^4+108z^5+144z^6.
	\]
	
	Let $G_c$ be a generator matrix of $U_c$.
	Computing the characteristic polynomials of the
	corresponding represented derived matroids gives
	\[
	\begin{aligned}
		\chi_{\delta_{G_2}M(G_2)}(t)
		&=(t-1)\bigl((t-7)^2+2\bigr),\\
		\chi_{\delta_{G_4}M(G_4)}(t)
		&=(t-1)(t-7)^2.
	\end{aligned}
	\]
	Theorem~\ref{thm:critical-MDS} therefore yields
	\[
	N_{U_2}(Q)=(Q-7)^2+2,\qquad
	N_{U_4}(Q)=(Q-7)^2.
	\]
	In particular, $N_{U_2}(7)=2$ and $N_{U_4}(7)=0$,
	whereas $N_{U_4}(49)>0$. Thus
	$m_0(U_2)=1$ and $m_0(U_4)=2$.
\end{proof}

\subsection{Identical counts for inequivalent seeds}\label{sec:four-point-same-count}

The preceding example shows that completion counts can distinguish some
seeds. The converse fails: the next two explicit matrices give
inequivalent MDS seeds with identical counts over every scalar
extension.

\begin{proposition}\label{prop:same-counts-inequivalent}
Over \(\F_7\), let
\[
H_\lambda=
\begin{pmatrix}
1&0&1&1\\
0&1&1&\lambda
\end{pmatrix},
\qquad
C_\lambda=\ker H_\lambda,
\qquad \lambda\in\{2,3\}.
\]
Then \(C_2\) and \(C_3\) are monomially inequivalent
\([4,2,3]_7\) MDS codes, but for every \(Q=7^m\),
\[
N_{C_2}(Q)=N_{C_3}(Q)=Q-3.
\]
Thus the completion counts over all finite scalar extensions do not
determine the seed up to monomial equivalence.
\end{proposition}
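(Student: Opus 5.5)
The plan has three parts. First we check that both seeds are MDS. Then we compute the counts in syndrome coordinates using Proposition~\ref{prop:syndrome-realization}. Finally we separate the two seeds up to monomial equivalence by a cross-ratio invariant of the parity-check columns.

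\emph{MDS property.} Here $n=4$, $\dim C_\lambda=2$, so $k=3$ and $r=2$. The columns of $H_\lambda$ are $(1,0)^{\mathsf T},(0,1)^{\mathsf T},(1,1)^{\mathsf T},(1,\lambda)^{\mathsf T}$. Their six $2\times 2$ minors are $1,1,\lambda,-1,-1,\lambda-1$, and all are nonzero in $\F_7$ for $\lambda\in\{2,3\}$. Hence $C_\lambda^\perp$, the row space of $H_\lambda$, is a $[4,2,3]_7$ MDS code. Since the dual of an MDS code is MDS, $C_\lambda$ is also a $[4,2,3]_7$ MDS code. In particular $d(C_\lambda)=3\ge r$, so Theorem~\ref{thm:critical-MDS} and Proposition~\ref{prop:syndrome-realization} apply.

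\emph{Counts.} By Proposition~\ref{prop:syndrome-realization}, in syndrome coordinates $K^2$ with $K=\F_Q$, the obstruction hyperplanes are the spans of the $(r-1)$-subsets of columns of $H_\lambda$. With $r-1=1$, these are the four lines $\Span_K\{h_j\}$. The columns are pairwise non-proportional by the minor computation, so the four lines are distinct and meet pairwise only in $0$. Their union therefore has $1+4(Q-1)$ elements, and the complement has
\[
Q^2-1-4(Q-1)=(Q-1)(Q-3)
\]
elements. By Theorems~\ref{thm:covering} and~\ref{thm:critical-MDS}, the number of completions is this complement size divided by $Q-1$, which gives $N_{C_\lambda}(Q)=Q-3$ for both values of $\lambda$. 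Equivalently, $\chi_{\delta_GM(G)}(t)=(t-1)(t-3)$, the characteristic polynomial of the arrangement of four distinct points on a line.

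\emph{Inequivalence.} This is the only step needing a genuine invariant.
\begin{enumerate}[label=\textup{(\arabic*)},leftmargin=2.5em]
\item Suppose $C_3=C_2M$ with $M$ monomial. Then $C_3^\perp=C_2^\perp M^{-\mathsf T}$, and $M^{-\mathsf T}$ is again monomial. So $H_3=S\,H_2\,D\,P$ for some $S\in\operatorname{GL}_2(\F_7)$, an invertible diagonal matrix $D$, and a permutation matrix $P$.
\item Consequently the four column points of $H_2$ and of $H_3$ in $\operatorname{PG}(1,7)$ are related by an element of $\operatorname{PGL}_2(\F_7)$, up to relabelling. The unordered set of cross-ratios over all orderings of the four points is therefore an invariant.
\item In the affine coordinate $x_2/x_1$, the points are $\infty,0,1,\lambda$, with $(1,0)\mapsto 0$ and $(0,1)\mapsto\infty$. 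The cross-ratio set is
\[
\{\lambda,\ 1/\lambda,\ 1-\lambda,\ 1/(1-\lambda),\ \lambda/(\lambda-1),\ (\lambda-1)/\lambda\}.
\]
\item For $\lambda=2$ this set is $\{2,4,6\}$, the harmonic case. For $\lambda=3$ it is $\{3,5\}$, the equianharmonic case.
\end{enumerate}
Since the two sets differ, $C_2$ and $C_3$ are monomially inequivalent, while their counts agree over every $\F_{7^m}$.
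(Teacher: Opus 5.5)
Your proof is correct. The MDS check matches the paper's. Your count is the paper's count in different form: the paper identifies $\delta_GM(G)$ as $U_{2,4}$ and reads off $\chi_{U_{2,4}}(t)=(t-1)(t-3)$, while you count the complement of the four lines $\Span_K\{h_j\}$ in syndrome space directly, as in Example~\ref{ex:covering-three-coordinates}.

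The genuine difference is the inequivalence step. The paper asserts, by "direct computation", that $|\operatorname{MAut}(C_2)|=48$ and $|\operatorname{MAut}(C_3)|=72$, and compares these orders. You instead pass to the duals and use the $\operatorname{PGL}_2(\F_7)$-orbit of cross-ratios of the four parity-check column points in $\operatorname{PG}(1,7)$. This gives the harmonic set $\{2,4,6\}$ versus the equianharmonic set $\{3,5\}$, which can be checked by hand.

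Your route also explains the paper's numbers. A monomial automorphism of $C_\lambda$ acting trivially on the four points must be scalar, because four pairwise non-proportional eigenvectors in dimension $2$ force $S$ to be scalar; so the kernel has order $6$. The image is the $\operatorname{PGL}_2$-stabilizer of the quadruple. That stabilizer is $D_4$ (order $8$) in the harmonic case and $A_4$ (order $12$) in the equianharmonic case, giving $6\cdot 8=48$ and $6\cdot 12=72$.

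Since $\operatorname{PGL}_2$ is sharply $3$-transitive, the cross-ratio orbit is a \emph{complete} invariant for four distinct points on a projective line. So your argument decides monomial equivalence for every $C_\lambda$, whereas automorphism-group orders can only separate codes. The trade-off is that your invariant depends on $\dim C_\lambda^\perp=r=2$, so that the column points lie on a line. Automorphism-group orders make sense for seeds of any redundancy.
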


\begin{proof}
	All $2\times2$ minors of $H_2$ and $H_3$ are nonzero,
	so both codes are $[4,2,3]_7$ MDS codes.
	For either code, the represented derived matroid
	is $U_{2,4}$, with characteristic polynomial
	\[
	\chi_{U_{2,4}}(t)=(t-1)(t-3).
	\]
	Theorem~\ref{thm:critical-MDS} therefore gives
	\[
	N_{C_2}(Q)=N_{C_3}(Q)=Q-3.
	\]
	
	To establish inequivalence, a direct computation
	of the monomial automorphism groups gives
	\[
	|\operatorname{MAut}(C_2)|=48,\qquad
	|\operatorname{MAut}(C_3)|=72.
	\]
	Since monomially equivalent codes have conjugate
	monomial automorphism groups, the two codes
	are not monomially equivalent.
\end{proof}

\Needspace{18\baselineskip}
\section{Conclusion}\label{sec:conclusion}

Using the full-weight correspondence for the Pl\"ucker extension
code, we have shown that a fixed $(k-1)$-dimensional code
$U\le\F_q^n$ admits an $[n,k]$ MDS completion over some finite
extension if and only if $d(U)\ge n-k+1$. Under this condition,
the number of completions over $\F_{q^m}$ is
\[
\frac{\chi_{\delta_GM(G)}(q^m)}{q^m-1},
\]
so existence at a given extension degree is equivalent to
positivity of the numerator. The first extension degree at which
a full-weight word appears equals the minimum size of a support
cover in $\cP(U)$ and admits explicit upper bounds.

The characteristic polynomial $\chi_{\delta_GM(G)}$ also has a
metric interpretation: it counts cosets at the redundancy distance and, when the covering
radius equals that distance, determines the number of deep-hole
words. The two examples show that completion counts capture
information beyond ordinary code parameters and the column
matroid, while equality of these counts over all finite
extensions does not imply monomial equivalence.

\section*{Acknowledgments}
We would like to express our sincere gratitude to the editor for their insightful comments and constructive suggestions. These valuable feedbacks have significantly helped us improve the organization and clarity of this manuscript.

\section*{Funding}
This work was supported by the National Natural Science Foundation of China (grant numbers 12571003 and 12501006) and the Basic and Applied Basic Research Foundation of Guangdong Province (grant number 2024A1515010589).

\section*{AI Use Declaration}
During the preparation and revision of this manuscript, generative AI tools were used for literature searches, comparisons with related work, and finding examples. The mathematical claims, proofs, and final presentation remain the responsibility of the authors.

\section*{Data Availability}
No datasets were generated or analysed during the current study.

\section*{Competing Interests}
The authors declare no competing interests.


\small
\begin{thebibliography}{99}

\bibitem{HuffmanPless}
W.~C. Huffman, V. Pless,
\emph{Fundamentals of Error-Correcting Codes},
Cambridge University Press, Cambridge, 2003.

\bibitem{BallLavrauwArcs}
S. Ball, M. Lavrauw,
Arcs in finite projective spaces,
\emph{EMS Surv. Math. Sci.} 6 (1--2) (2019), 133--172.
\newblock \url{https://doi.org/10.4171/EMSS/33}.

\bibitem{HarrisAG}
J. Harris,
\emph{Algebraic Geometry: A First Course},
Graduate Texts in Mathematics 133, Springer, New York, 1992.

\bibitem{OxleyBook}
J. Oxley,
\emph{Matroid Theory}, 2nd ed.,
Oxford Graduate Texts in Mathematics 21, Oxford University Press, Oxford, 2011.

\bibitem{LiYuanCyclicCovering}
Y. Li, P. Yuan,
Cyclic codes and cyclically covering subspaces over finite fields,
\emph{Designs, Codes and Cryptography} 94, 214 (2026).
\newblock \url{https://doi.org/10.1007/s10623-026-01956-1}.

\bibitem{WuDingChen}
Y. Wu, C. Ding and T. Chen,
When does the extended code of an MDS code remain MDS?,
\emph{IEEE Trans. Inform. Theory} 71 (1) (2025), 263--272.
\newblock \url{https://doi.org/10.1109/TIT.2024.3494813}.

\bibitem{KaipaDeepHoles}
K. Kaipa,
Deep holes and MDS extensions of Reed--Solomon codes,
\emph{IEEE Trans. Inform. Theory} 63 (8) (2017), 4940--4948.
\newblock \url{https://doi.org/10.1109/TIT.2017.2706677}.

\bibitem{ZhangWanKaipa}
J. Zhang, D. Wan, K. Kaipa,
Deep holes of projective Reed--Solomon codes,
\emph{IEEE Trans. Inform. Theory} 66 (4) (2020), 2392--2401.
\newblock \url{https://doi.org/10.1109/TIT.2019.2940962}.

\bibitem{CampsGorlaLopezCodeDistances}
E. Camps-Moreno, E. Gorla, H.~H. L\'opez,
Code distances: a new family of invariants of linear codes,
\emph{Des. Codes Cryptogr.}, to appear.
\newblock arXiv:2509.16424v2, 27 August 2026.
\newblock \url{https://arxiv.org/abs/2509.16424v2}.

\bibitem{PellikaanECP}
R. Pellikaan,
On the existence of error-correcting pairs,
\emph{J. Statist. Plann. Inference} 51 (2) (1996), 229--242.
\newblock \url{https://doi.org/10.1016/0378-3758(95)00088-7}.

\bibitem{WangLuo2026}
P. Wang, J. Luo,
Extension of MDS Codes over Finite Fields,
arXiv:2610.31023, 2026.
\newblock \url{https://arxiv.org/abs/2610.31023}.

\bibitem{JurriusPellikaanCoset}
R. Jurrius, R. Pellikaan,
The coset leader and list weight enumerator,
in: G.~M. Kyureghyan, G.~L. Mullen, A. Pott (Eds.),
\emph{Topics in Finite Fields}, Contemp. Math. 632,
Amer. Math. Soc., Providence, RI, 2015, pp. 229--251.
\newblock \url{https://doi.org/10.1090/conm/632/12631}.
\newblock Corrected author version, 7 November 2019:
\url{https://ruudp.win.tue.nl/paper/71.pdf}.

\bibitem{OxleyWang}
J. Oxley, S. Wang,
Dependencies among dependencies in matroids,
\emph{Electron. J. Combin.} 26 (3) (2019), Paper No. 3.46.
\newblock \url{https://doi.org/10.37236/8742}.


\end{thebibliography}
\end{document}